\documentclass[aps,prl,reprint,superscriptaddress]{revtex4-1}

\usepackage{amsmath,amssymb,amsthm}

\usepackage[thinlines,thiklines]{easybmat}

\usepackage{verbatim,epsfig}

\usepackage{color}

\usepackage{stackengine}

\usepackage{framed}
{\begin{leftbar}\underline{\textbf{Internal Notes}}:\begin{quotation}}%
{\end{quotation}\end{leftbar}}

\usepackage{mdwlist}

\newtheorem{theorem}{Theorem}
\newtheorem{proposition}{Proposition}
\newtheorem{lemma}{Lemma}

\newtheorem{corollary}{Corollary}
\theoremstyle{definition}


\newcommand{\ket}[1]{|#1\rangle}


\newcommand{\op}[2]{|#1\rangle \langle #2|}

\DeclareMathOperator{\tr}{Tr}

\newcommand{\mc}[1]{\mathcal{#1}}

\setcounter{tocdepth}{2} 

\begin{document}

\title{A Classical Analog to Entanglement Reversibility}

\author{Eric Chitambar}
\affiliation{Department of Physics and Astronomy, Southern Illinois University, Carbondale, Illinois 62901, USA}
\author{Ben Fortescue}
\affiliation{Department of Physics and Astronomy, Southern Illinois University, Carbondale, Illinois 62901, USA}
\author{Min-Hsiu Hsieh}
\affiliation{Centre for Quantum Computation \& Intelligent Systems (QCIS), Faculty of Engineering and Information Technology (FEIT), University of Technology Sydney (UTS), NSW 2007, Australia}

\date{\today}

\begin{abstract}
In this letter we introduce the problem of secrecy reversibility.  This asks when two honest parties can distill secret bits from some tripartite distribution $p_{XYZ}$ and transform secret bits back into $p_{XYZ}$ at equal rates using local operation and public communication (LOPC).  This is the classical analog to the well-studied problem of reversibly concentrating and diluting entanglement in a quantum state.  We identify the structure of distributions possessing reversible secrecy when one of the honest parties holds a binary distribution, and it is possible that all reversible distributions have this form.  These distributions are more general than what is obtained by simply constructing a classical analog to the family of quantum states known to have reversible entanglement.  An indispensable tool used in our analysis is a conditional form of the G\'{a}cs-K\"{o}rner common information. 

\end{abstract}

\maketitle

Resource theories offer a powerful framework for studying what physical processes are possible under a certain class of constraints.  For instance, when studying the manipulation of quantum systems, entanglement is identified as a precious resource that cannot be freely generated under local quantum operations and classical communication (LOCC).  Inspired by the conceptual successes of entanglement theory, researchers have recently begun applying a resource-theoretic perspective toward the notion of secrecy in classical information theory \cite{Collins-2002a, Horodecki-2005c}.  For two-party secrecy, one considers tripartite distributions $p_{XYZ}$: Alice ($X$) and Bob ($Y$) share correlations about which, undesirably, Eve ($Z$) has side information.  The distributions are manipulated using local operations and public communication (LOPC), which is the classical analog of LOCC.  Just as the ebit $\ket{\Phi}=\sqrt{1/2}(\ket{00}+\ket{11})$ represents a fundamental unit of entanglement, a secret bit $\Phi_{XY}\cdot q_Z$ represents a fundamental unit of secrecy.  Here, $\Phi_{XY}(i,j)=(1/2)\delta_{ij}$ is a perfectly correlated bit while $q_Z$ is an arbitrary and uncorrelated distribution.

Quantum entanglement and classical secrecy share many striking similarities \cite{Collins-2002a, Gisin-2002b, Acin-2003b, Horodecki-2005c, Acin-2005b, Christandl-2007a, Oppenheim-2008a, Bae-2009a, Ozols-2014a}.  One important similarity lies in the tasks of resource distillation and resource cost.  For a bipartite quantum state $\rho_{AB}$, its \textit{distillable entanglement} $E_D(\rho_{AB})$ quantifies, roughly speaking, the amount of ebits that can be distilled from $\rho_{AB}$ using LOCC \cite{Rains-1999a} (in the many-copy sense), while its \textit{entanglement cost} $E_C(\rho_{AB})$ quantifies the amount of ebits required to generate $\rho_{AB}$ using LOCC \cite{Hayden-2001a}.  For a distribution $p_{XYZ}$, its ``secrecy content'' can analogously be quantified in terms of its distillable key $K_D(p_{XYZ})$ \cite{Ahlswede-1993a, Maurer-1993a} and its key cost $K_C(p_{XYZ})$ \cite{Renner-2003a}.  Here, the distillation goal is to obtain secret bits $\Phi_{XY}$ from $p_{XYZ}$, while the formation goal is simulate $p_{XYZ}$ using $\Phi_{XY}$ and public communication.  Compared to entanglement theory, much less is known about the relationship between $K_D$ and  $K_C$, except for the expected hierarchy $K_C\geq K_D$ \cite{Renner-2003a}.  

With the inequality $K_C\geq K_D$, classical secrecy can be given a thermodynamic interpretation similar to entanglement \cite{Popescu-1997a, Horodecki-2002c}.  By the second law of thermodynamics, a heat engine cannot do more work when transferring heat from one temperature bath to a lower one than the work required to perform the reverse refrigeration process.  Likewise, $K_C(p_{XYZ})\geq K_D(p_{XYZ})$ means that an LOPC protocol is not able to distill more secret bits from $p_{XYZ}$ than the secret bits needed to perform the reverse formation process.  Any distribution for which this inequality is tight can thus be regarded as the secrecy analog of a reversible heat engine. The \textit{secrecy reversibility problem} asks what distributions satisfy $K_C(p_{XYZ})=K_D(p_{XYZ})$.   

To begin tackling this problem, it is instructive to first consider the quantum scenario.  It is well-known that all bipartite quantum pure states demonstrate entanglement reversibility: any pure state can be concentrated into an EPR state $\ket{\Phi}$ and diluted back to the original state at equal rates \cite{Bennett-1996b}.  Thus, a natural starting place to find reversible secrecy is with a classical analog to quantum pure states.  Collins and Popescu have investigated \cite{Collins-2002a} one such analog based on an embedding of $p(x,y,z)$ into a tripartite quantum state given by
\begin{equation}
\label{Eq:qqqEmbedding}
\ket{\Psi}_{ABE}=\sum_{x,y,z}\sqrt{p(x,y,z)}\ket{xyz}.
\end{equation}  
If Alice and Bob's reduced state in $\ket{\Psi}$ is pure, then $\ket{\Psi}$ can always be expressed as $\ket{\Psi}=\sum_{j,z}\sqrt{p(j)q(z)}\ket{\alpha_j\beta_j}\ket{z}$, where $\ket{\alpha_j}$ and $\ket{\beta_j}$ are Schmidt basis vectors.
With this motivation, Collins and Popescu have proposed distributions of the form $p(x,y,z)=\delta_{xy}p(x)q(z)$ to be the classical analog to quantum pure states (another type of analog has also been proposed in the literature  \footnote{In Ref.~\cite{Oppenheim-2008a}, the authors introduce another class of distributions, called \textit{bi-disjoint}, which they propose as a different analog to quantum pure states.  In terms of the notation used here, a distribution is bi-disjoint iff $I(XY:Z|J_{(XY)Z})=0$, where $J_{(XY)Z}$ is the common information between Alice-Bob (jointly) and Eve.  It is shown in \cite{Oppenheim-2008a} that bi-disjoint distributions behave like quantum pure states for the task of state merging.  However, in general they fail to possess reversible secrecy, nor do they behave like quantum pure states for single-copy state transformations.  It therefore seems that classical analogies to quantum pure states can only be drawn \textit{with respect to specific information-theoretic tasks}, a conclusion already implicitly acknowledged in \cite{Oppenheim-2008a}.  For the task of resource reversibility, SBI distributions are the more appropriate analog to quantum pure states.}).  Actually, we can generalize the Collins-Popescu class of distributions to include distributions of the form
\begin{equation}
\label{Eq:SBI}
p(x,y,z)=\sum_jp(x|j)p(y|j)p(j)q(z),
\end{equation}
where $p(x|j)p(x|j')=p(y|j)p(y|j')=0$ if $j\not=j'$.  A quantum embedding of any such distribution \textit{\`{a} la} Eq. \eqref{Eq:qqqEmbedding} recovers a pure state for Alice and Bob with Schmidt basis vectors $\ket{\alpha_j}=\sum_x\sqrt{p(x|j)}\ket{x}$ and $\ket{\beta_j}=\sum_y\sqrt{p(y|j)}\ket{y}$.  We refer to any distribution having the the form of Eq. \eqref{Eq:SBI} as \textit{secret block independent} (SBI), and they may also be considered as a type of ``classical pure state.''  Like the Collins-Popescu distributions, the theory of single-copy state transformations can be constructed for SBI states analogous to pure quantum states \cite{Collins-2002a}.  Furthermore, just as pure quantum states possess reversible entanglement, SBI distributions possess reversible secrecy, as will be shown below (also see footnote 12 in \cite{Oppenheim-2008a}).   

However, it turns out that a much richer class of distributions beyond SBI also demonstrate secrecy reversibility.  To begin studying this class, we first recall a well-known upper bound on $K_D(p_{XYZ})$ referred to as the \textit{intrinsic information} of  $p_{XYZ}$ \cite{Maurer-1999a}.  This quantity is given by 
\begin{equation}
\label{Eq:IntrinsicInfo}
I(X:Y\downarrow Z):=\min I(X:Y|\overline{Z}),
\end{equation}
where the minimization is taken over over all auxiliary variables $\overline{Z}$ such that $XY-Z-\overline{Z}$ forms a Markov chain \footnote{Recall, a triple of random variables $ABC$ ranging over $\mc{A}\times\mc{B}\times\mc{C}$ form a Markov chain $A-B-C$ if $p(a|bc)=p(a|b)$ for all $(a,b,c)\in \mc{A}\times\mc{B}\times\mc{C}$ such that $p(b,c)>0$.  In entropic terms, this is equivalent to the vanishing of the conditional mutual information: $I(A,C|B)=0$.}.  Using the definition of key cost, Renner and Wolf were able to prove that $K_C(p_{XYZ})\geq I(X:Y\downarrow Z)$ \cite{Renner-2003a}, and thus 
\begin{equation}
\label{Eq:Dist-Intrinsic-Cost}
K_D(p_{XYZ})\leq I(X:Y\downarrow Z)\leq K_C(p_{XYZ}).
\end{equation}
Consequently, we can split the secrecy reversibility problem into two separate questions: (1) when does $K_C(p_{XYZ})=I(X:Y\downarrow Z)$, and (2) when does $K_D(p_{XYZ})=I(X:Y\downarrow Z)$?  We answer the first question below and reference certain results from Ref. \cite{Chitambar-2014c} where we have recently studied the second question.  However, before doing so, we introduce a variety of distribution classes based on the notion of a conditional common function since these classes will play a central role in our analysis of reversible secrecy.


\medskip

\noindent\textit{Common Functions and UBI-PD$\downarrow$ Distributions.}
For distribution $p_{XY}$, a \textit{maximal common function} is a variable $J_{XY}$ such that
\begin{equation}
\label{Eq:GK-Common-Info1}
H(J_{XY})=\max_K \{H(K):0=H(K|X)=H(K|Y)\}.
\end{equation}
The value $H(J_{XY})$ has been identified by G\'{a}cs and K\"{o}rner as the \textit{common information} between $X$ and $Y$ \cite{Gacs-1973a}.  It can be shown that for every $p_{XY}$, the variable $J_{XY}$ is unique up to a relabeling of its range (see Supplemental Material).  Note that an SBI distribution can be equivalently characterized by the entropic condition $I(X:Y|J_{XY})=0$, and $H(J_{XY})=I(X:Y)$ for these distributions \cite{Gacs-1973a}.

For a tripartite distribution $p_{XYZ}$, we will denote a maximal common function of the conditional distribution $p_{XY|Z=z}$ by $J_{XY|Z=z}$.  Then, a \textit{maximal conditional common function} $J_{XY|Z}$ is just a collection of maximal common functions $\{J_{XY|Z=z}:p(z)>0\}$.  Again, the variable $J_{XY|Z}$ is unique up to relabeling.  We say that a distribution $p_{XYZ}$ is \textit{block independent} (BI) if $I(X:Y|J_{XY|Z}Z)=0$;
equivalently, if the distribution decomposes as
\begin{equation}
\label{Eq:BIform}
p(x,y,z)=\sum_{z\in\mc{Z}}\sum_{J_{XY|Z=z}=j}p(x|z,j)p(y|z,j)p(j,z),
\end{equation}
where $p(x|z,j)p(x|z,j')=0$ and $p(y|z,j)p(y|z,j')=0$ for $j\not=j'$.  Obviously SBI distributions are simply BI with an uncorrelated Eve.  A distribution is said to be \textit{uniform block independent} (UBI) if it is block independent, and there exist local coarse-graining maps $K_X(X)$ and $K_Y(Y)$ such that $Pr[J_{XY|Z}=K_X=K_Y]=1$ for some maximal common function $J_{XY|Z}$.  In other words, Alice and Bob can determine the value for $J_{XY|Z}$ simply by consulting their local variable.  With many copies of a UBI distribution, secret key can be distilled via privacy amplification at an optimal rate $H(J_{XY|Z}|Z)=I(X:Y|Z)$ \cite{Ahlswede-1993a, Bennett-1995a}.  

However, in general $J_{XY|Z}$ will be unknown to Alice and Bob unless they engage in public communication.  A public communication protocol is a sequence of public messages $M=(M_1,M_2,\cdots,M_r)$ such that $M_k$ is a function of both $M_{k-1}\cdots M_1$ and $X$ (resp. $Y$) when $k$ is odd (resp. even).  At the end of these exchanges, the new object of interest becomes $J_{XY|ZM}$, which is a maximal conditional common function for the distribution $p_{(XM)(YM)(ZM)}$.  It can easily be proven that when $p_{XYZ}$ is BI, so is $p_{(MX)(MY)(ZM)}$, and furthermore
\begin{align}
\label{Eq:CCI-Messages-Equality}
I(X:Y|ZM)=I(X:Y|Z)-I(M:J_{XY|Z}|Z)
\end{align}
(see Supplemental Material).  This equation formalizes the intuitive idea that messages $M$ will decrease Alice and Bob's average conditional common information unless, from Eve's perspective, the messages are independent of $J_{XY|Z}$.  

With this motivation, we say $p_{XYZ}$ is \textit{uniform block independent under public discussion} (UBI-PD) if it is BI and there is a public communication protocol generating messages $M$ such that $p_{(MX)(MY)(ZM)}$ is UBI and $I(M:J_{XY|Z}|Z)=0$.  Thus, UBI-PD distributions have a distillation rate of $H(J_{XY|ZM}|ZM)$, which by Eq. \eqref{Eq:CCI-Messages-Equality} is equal to $H(J_{XY|Z}|Z)=I(X:Y|Z)$.  We say a distribution belongs to the class UBI-PD$\downarrow$ if there exists a channel $\overline{Z}|Z$ such that $p_{XY|\overline{Z}}$ is UBI with the required public communication $M$ also satisfying $I(Z:J_{XY|\overline{Z}}|M\overline{Z})=0$.  This latter condition assures that $K_D(p_{XYZ})=H(J_{XY|\overline{Z}}|\overline{Z})=I(X:Y\downarrow Z)$.  Indeed, for every UBI-PB$\downarrow$ distribution, $J_{XY|\overline{Z}}$ becomes shared randomness under communication $M$.  Thus, an achievable key rate is 
\[H(J_{XY|\overline{Z}}|ZM)=H(J_{XY|\overline{Z}}|\overline{Z}M)=H(J_{XY|\overline{Z}}|\overline{Z}),\] where the first equality follows from $I(Z:J_{XY|\overline{Z}}|M\overline{Z})=0$ and the second from $I(M:J_{XY|\overline{Z}}|\overline{Z})=0$.  Fig. \ref{Fig:UBIPDdownarrow} depicts a UBI-PD$\downarrow$ distribution.  We encourage the reader to visit the Supplemental Material for a comparative picture of the various distribution classes identified here.  

\begin{figure}
\centering
     \includegraphics[scale=1.1]{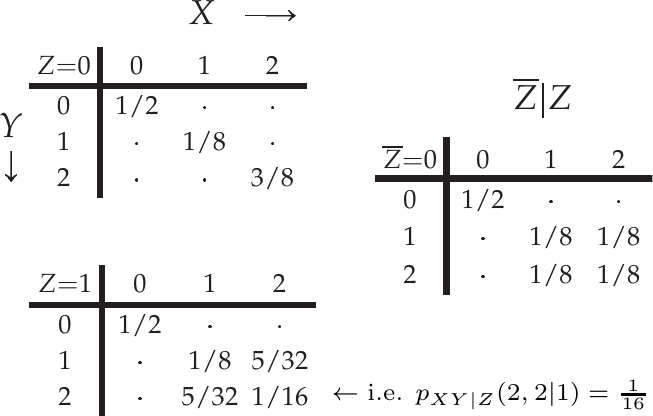}
     \caption{A UBI-PD$\downarrow$ distribution where $p_Z(0)=1/5$, $p_Z(1)=4/5$ and $\overline{Z}$ is a full coarse-graining of $Z$.  In this simplified example, no communication is needed for Alice and Bob to both generate $J_{XY|\overline{Z}}$.  Note that $p_{XYZ}$ itself is not BI.}
		\label{Fig:UBIPDdownarrow}
      \end{figure}

\medskip
\noindent\textit{When does $K_C(p_{XYZ})=I(X:Y\downarrow Z)$?}
This question can be answered using the formula for key cost as computed by Winter, a significant result on its own since no single-letter expression is known for $K_D(p_{XYZ})$.
\begin{lemma}[Winter \cite{Winter-2005a}]
\label{Lem:WinterKC}
For a distribution $p_{XYZ}$,
\begin{equation}
\label{Eq:WinterKC}
K_C(p_{XYZ})=\min I(XY:W|\overline{Z}),
\end{equation}
where the minimization is over all auxiliary variables $W$ and $\overline{Z}$ which satisfy $XY-Z-\overline{Z}$ and $X-W\overline{Z}-Y$.  
\end{lemma}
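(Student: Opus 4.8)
The plan is to prove the two bounds $K_C(p_{XYZ})\le\min I(XY:W\,|\,\overline{Z})$ and $K_C(p_{XYZ})\ge\min I(XY:W\,|\,\overline{Z})$ separately --- achievability by an explicit distributed-simulation protocol and optimality via an LOPC monotone --- in close parallel with the proof that entanglement cost equals the regularized entanglement of formation \cite{Hayden-2001a}. Here $W$ should be read as a ``formation variable'' whose randomness Alice and Bob must import from the secret bits, while the optimization over channels $\overline{Z}\,|\,Z$ (the Markov condition $XY-Z-\overline{Z}$) encodes that Eve is passive: she can only garble the variable she holds, so only degradations of $Z$ affect how much secrecy a formation protocol is really forced to spend. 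It is convenient to set $f(p_{XYZ}):=\min\{I(XY:W\,|\,\overline{Z}) : XY-Z-\overline{Z},\ X-W\overline{Z}-Y\}$; the argument below naturally yields $K_C(p_{XYZ})=\lim_n\tfrac{1}{n}f(p_{XYZ}^{\otimes n})$, and the genuine content of the lemma --- which I expect to be the main obstacle --- is to show that this regularization collapses to the single-copy value.

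For achievability, fix a near-optimal pair $(W,\overline{Z})$ for $f(p_{XYZ})$, so that $p(x,y\,|\,\overline{z})=\sum_w p(w\,|\,\overline{z})\,p(x\,|\,w,\overline{z})\,p(y\,|\,w,\overline{z})$ and $\overline{Z}$ is a garbling of $Z$. To produce $p_{XYZ}^{\otimes m}$: (i) Alice and Bob use free public randomness to jointly sample $\overline{Z}^m\sim p_{\overline{Z}}^{\otimes m}$, which also fixes what Eve is to hold, her variable $Z^m$ being supplied consistently through the public transcript; (ii) they then consume secret bits at rate $I(XY:W\,|\,\overline{Z})$ to agree on a string $W^m$ --- this is a conditional distributed-source-simulation (``conditional Wyner common information'') construction, which requires shared \emph{secret} randomness at rate $I(XY:W\,|\,\overline{Z})$ rather than $H(W\,|\,\overline{Z})$ because the slack $H(W\,|\,XY\overline{Z})$ can be regenerated from free local randomness; (iii) Alice applies $p(x\,|\,w,\overline{z})^{\otimes m}$ and Bob applies $p(y\,|\,w,\overline{z})^{\otimes m}$. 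The Markov chains $X-W\overline{Z}-Y$ and $XY-Z-\overline{Z}$ make the output $\epsilon$-close to $p_{XYZ}^{\otimes m}$ in total variation. Letting $\epsilon\to0$, $m\to\infty$, and then applying the same construction with $p_{XYZ}^{\otimes n}$ in place of $p_{XYZ}$ and optimizing over $n$ gives $K_C(p_{XYZ})\le\lim_n\tfrac{1}{n}f(p_{XYZ}^{\otimes n})$.

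For optimality I would show $f$ is (i) non-increasing under LOPC, (ii) asymptotically continuous, and (iii) normalized with $f(\Phi_{XY}^{\otimes m}\cdot q_Z)=m$. Normalization is immediate (take $W=X$ and $\overline{Z}$ trivial), and (ii) follows from continuity of conditional mutual information together with the Carath\'{e}odory-type cardinality bounds $|\overline{\mathcal{Z}}|\le|\mathcal{Z}|$ and $|\mathcal{W}|\le|\mathcal{X}|\,|\mathcal{Y}|\,|\overline{\mathcal{Z}}|$ on the optimal auxiliary variables. For (i), a local coarse-graining only decreases $I(XY:W\,|\,\overline{Z})$ by data processing; for a public message $M$ --- say a function of $X$ and Alice's private randomness --- one updates the distribution to $\bigl((X,M),(Y,M),(Z,M)\bigr)$ and chooses $\overline{Z}\mapsto(\overline{Z},M)$, $W\mapsto W$, whereupon the Markov conditions survive (Alice's private randomness being independent of everything else) and a short calculation using that same independence gives $I(XY:W\,|\,\overline{Z}M)\le I(XY:W\,|\,\overline{Z})$, so $f$ does not increase. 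Applying monotonicity and continuity to any formation protocol carrying $nR$ secret bits to within $\epsilon$ of $p_{XYZ}^{\otimes n}$ gives $nR\ge f(p_{XYZ}^{\otimes n})-n\delta(\epsilon)$, hence $K_C(p_{XYZ})\ge\lim_n\tfrac{1}{n}f(p_{XYZ}^{\otimes n})$.

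It then remains to single-letterize. Conditional mutual information being additive on product distributions, $f$ is trivially subadditive; the delicate point --- and the step I expect to be the real obstacle --- is the reverse inequality $f(p^{\otimes n})\ge n f(p)$, which is not automatic because a jointly optimal $(W,\overline{Z})$ for $n$ copies may correlate the copies. The resolution exploits precisely the freedom that $\overline{Z}$ may be \emph{any} degradation of $Z^n$: adjoining the data $(X_2\cdots X_n, Y_2\cdots Y_n)$ of the other copies to $\overline{Z}$ yields a legal degradation of $Z_1$ under $p_{XYZ}^{\otimes n}$, and, together with the corresponding bookkeeping on $W$, this is the engine that decouples the copies without raising the objective. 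Carrying this out carefully is the technical heart of Winter's argument \cite{Winter-2005a}, which I would follow to conclude $\lim_n\tfrac{1}{n}f(p_{XYZ}^{\otimes n})=f(p_{XYZ})$ and therefore $K_C(p_{XYZ})=\min I(XY:W\,|\,\overline{Z})$.
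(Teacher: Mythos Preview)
The paper does not prove Lemma~\ref{Lem:WinterKC}; it is stated with attribution to Winter \cite{Winter-2005a} and used as a black box for the subsequent analysis (Lemma~\ref{Lem:Structure1} and onward). There is therefore no proof in the paper against which to compare your attempt.

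As for your sketch itself: the overall architecture (achievability via distributed simulation, converse via LOPC monotonicity plus asymptotic continuity, then single-letterization) is the natural one and mirrors the entanglement-cost proof you cite. Two remarks. First, in your achievability step you gloss over how Eve obtains a correctly distributed $Z^m$: you have Alice and Bob publicly sample $\overline{Z}^m$, but the Markov chain $XY-Z-\overline{Z}$ runs the wrong way for Eve to regenerate $Z$ from $\overline{Z}$ alone while remaining correctly correlated with $XY$; the actual protocol has to broadcast enough for Eve to reconstruct $Z$, and the role of the $\overline{Z}$-minimization is subtler than ``Eve can only garble.'' Second, you yourself identify the single-letterization as the crux and then defer it back to Winter's argument---which is circular, since the lemma \emph{is} Winter's result. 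So what you have is a plausible roadmap rather than a proof, and the paper offers nothing further to compare it to.
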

\noindent Our task will now be to reproduce Renner and Wolf's result that $K_C(p_{XYZ})\geq I(X:Y\downarrow Z)$ directly from Winter's formula \eqref{Eq:WinterKC}.  In doing so, we will obtain a structure condition for when $K_C(p_{XYZ})=I(X:Y\downarrow Z)$.

\begin{lemma}
\label{Lem:Structure1}
For the distribution $p_{XYZ}$, $K_C(p_{XYZ})\geq I(X:Y\downarrow Z)$.  Equality is obtained iff $p_{XY\overline{Z}}$ is BI, where $\overline{Z}|Z$ is the minimizer in $I(X:Y\downarrow Z)$.  When equality holds, $K_C(p_{XYZ})=I(X:Y\downarrow Z)=H(J_{XY|\overline{Z}}|\overline{Z})$.  
\end{lemma}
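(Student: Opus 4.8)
The plan is to derive the inequality directly from Winter's formula (Lemma~\ref{Lem:WinterKC}) and then read the equality condition off the slack terms in that derivation, treating the two directions separately.

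First, let $(W,\overline{Z})$ be a minimizer in Lemma~\ref{Lem:WinterKC}, so $K_C(p_{XYZ})=I(XY:W|\overline{Z})$ with $XY-Z-\overline{Z}$ and $I(X:Y|W\overline{Z})=0$. Expanding $I(X:YW|\overline{Z})$ with the chain rule in the two possible orders gives $I(X:Y|\overline{Z})+I(X:W|Y\overline{Z})=I(X:W|\overline{Z})+I(X:Y|W\overline{Z})$, and since the last term vanishes, $I(X:Y|\overline{Z})=I(X:W|\overline{Z})-I(X:W|Y\overline{Z})\le I(X:W|\overline{Z})\le I(XY:W|\overline{Z})$. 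Because $XY-Z-\overline{Z}$, this $\overline{Z}$ is admissible in \eqref{Eq:IntrinsicInfo}, so $I(X:Y\downarrow Z)\le I(X:Y|\overline{Z})\le K_C(p_{XYZ})$, the asserted bound. Moreover, if equality holds then every inequality in this chain is tight: $I(X:Y|\overline{Z})=I(X:Y\downarrow Z)$ (so Winter's $\overline{Z}$ is itself an intrinsic-information minimizer), $I(X:W|Y\overline{Z})=0$, and — using $I(XY:W|\overline{Z})=I(X:W|\overline{Z})+I(Y:W|X\overline{Z})$ — also $I(Y:W|X\overline{Z})=0$.

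For sufficiency, suppose $\overline{Z}|Z$ attains $I(X:Y\downarrow Z)$ and $p_{XY\overline{Z}}$ is BI. Take $W:=J_{XY|\overline{Z}}$, which is a function of $(X,\overline{Z})$ and also of $(Y,\overline{Z})$; then $I(X:Y|W\overline{Z})=I(X:Y|J_{XY|\overline{Z}}\overline{Z})=0$ by the definition of BI, so $(W,\overline{Z})$ is admissible in Lemma~\ref{Lem:WinterKC}. Hence $K_C(p_{XYZ})\le I(XY:W|\overline{Z})=H(J_{XY|\overline{Z}}|\overline{Z})$, and the elementary BI identity $I(X:Y|\overline{Z})=H(J_{XY|\overline{Z}}|\overline{Z})$ — obtained by applying, for each value of $\overline{Z}$, the same reasoning that gives $H(J_{XY})=I(X:Y)$ in the unconditional case — shows the right-hand side equals $I(X:Y|\overline{Z})=I(X:Y\downarrow Z)$. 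Combined with the bound, all three quantities coincide.

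For necessity, assume $K_C(p_{XYZ})=I(X:Y\downarrow Z)$ and let $(W,\overline{Z})$ be Winter's minimizer. By the first part $\overline{Z}$ attains $I(X:Y\downarrow Z)$, and conditioned on any fixed value $\overline{Z}=z$ the triple $XYW$ satisfies all three Markov relations $X-W-Y$, $W-X-Y$ and $X-Y-W$. The main obstacle is to turn these relations into the BI structure of $p_{XY|\overline{Z}=z}$; I expect this to be a G\'{a}cs--K\"{o}rner-type structure lemma. The argument I have in mind: from $X-W-Y$, the common function $J_{XY|\overline{Z}=z}$ must be a function of $W$ (otherwise a single value of $W$ would be jointly compatible with some $x$ in one block and some $y$ in another, and $X-W-Y$ would then force the edge $x\sim y$ in the support graph between distinct blocks). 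Restricting to a fixed block, where the support graph is connected and $J$ is trivial, the three Markov relations persist; combining $W-X-Y$ with $X-W-Y$ gives $p(y\mid x)=p(y\mid w)$ whenever $p(x,w)>0$, so $x\mapsto p(\cdot\mid x)$ is constant on the $x$-support of each $w$, and feeding this through $X-Y-W$ along paths of the connected support graph propagates that constancy across the whole block. Thus $X$ and $Y$ are independent within each block and each value of $\overline{Z}$, i.e. $p_{XY\overline{Z}}$ is BI; the final entropic identity then follows exactly as in the sufficiency part.
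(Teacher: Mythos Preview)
Your proof is correct, and the overall architecture matches the paper's: derive the chain $K_C=I(XY{:}W|\overline Z)\ge I(X{:}W|\overline Z)\ge I(X{:}Y|\overline Z)\ge I(X{:}Y{\downarrow}Z)$ from Winter's formula, read off the equality conditions $Y{-}X\overline Z{-}W$ and $X{-}Y\overline Z{-}W$ (together with the built-in $X{-}W\overline Z{-}Y$), and handle sufficiency by plugging $W=J_{XY|\overline Z}$ back into Winter's formula. The sufficiency direction and the inequality chain are essentially identical to the paper's, just reparametrized via the chain rule rather than data processing.

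The genuine difference is in the necessity direction. The paper packages the two conditions $X{-}Y\overline Z{-}W$ and $Y{-}X\overline Z{-}W$ into the \emph{conditional double Markov chain} characterization $I(XY{:}W\mid J_{XY|\overline Z}\overline Z)=0$ (Proposition~\ref{Prop:Double-Markov} in the Supplemental Material), and then closes with a short entropic sandwich: $I(XY{:}W|\overline Z)=I(J_{XY|\overline Z}{:}W|\overline Z)\le H(J_{XY|\overline Z}|\overline Z)\le I(X{:}Y|\overline Z)$, where equality in the last step is exactly the BI condition. You instead argue directly on the support graph: first use $X{-}W{-}Y$ to make $J_{XY|\overline Z}$ a function of $W$ (so the three Markov chains survive restriction to a block), then within each block combine $W{-}X{-}Y$ and $X{-}W{-}Y$ to get $p(\cdot\,|x)=p(\cdot\,|w)$ for compatible $(x,w)$, and use $X{-}Y{-}W$ to transport a common $w$ along edges of the connected support graph, forcing $p(\cdot\,|x)$ to be constant on the block. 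This works; in effect you are reproving a sharpened form of the double-Markov lemma tailored to the presence of the third chain $X{-}W{-}Y$. The paper's route is more modular and isolates a reusable lemma; yours is self-contained and avoids the auxiliary entropic sandwich, at the cost of a slightly more delicate combinatorial step.
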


\begin{proof}
Let $XYZW\overline{Z}$ satisfy the minimization in Eq.~\eqref{Eq:WinterKC}.  Then we have the following chain of inequalities:
\begin{align}
\label{Eq:Cost-Ineq-chain}
K_C(p_{XYZ})&=I(XY:W|\overline{Z})\geq I(X:W|\overline{Z})\notag\\
&\qquad\geq I(X:Y|\overline{Z})\geq I(X:Y\downarrow Z).
\end{align}
The first inequality follows from the fact that $I(Y:W|X\overline{Z})\geq 0$ with equality obtained iff $W-X\overline{Z}-Y$; the second inequality is the data-processing inequality applied to $X-W\overline{Z}-Y$ with equality obtained iff $X-Y\overline{Z}-W$; and the third inequality follows from the definition of intrinsic information.

For the equality conditions, consider when $X-Y\overline{Z}-W$ and $Y-X\overline{Z}-W$.  This so-called ``conditional double Markov chain'' can only be satisfied if $I(XY:W|J_{XY|\overline{Z}}\overline{Z})=0$ (see Supplemental Material).  Using this we upper bound the key cost by
\begin{align}
\label{Eq:KeyCostCond1}
I(XY:W|\overline{Z})&=I(XYJ_{XY|\overline{Z}}:W|\overline{Z})\notag\\
&=I(J_{XY|\overline{Z}}:W|\overline{Z})\leq H(J_{XY|\overline{Z}}|\overline{Z}).
\end{align}
Since $J_{XY|Z}$ is both a function of $X$ and $Y$ given $Z$, it is easy to show $H(J_{XY|\overline{Z}}|\overline{Z})\leq I(X:Y|\overline{Z})$, with equality iff $I(X:Y|\overline{Z}J_{XY|\overline{Z}})=0$.  Hence demanding that $I(XY:W|\overline{Z})=I(X:Y|\overline{Z})$ gives the necessary conditions $H(J_{XY|\overline{Z}}|W\overline{Z})=0$ and $I(X;Y|J_{XY|\overline{Z}}\overline{Z})=0$.  

Conversely, if $p_{XY\overline{Z}}$ is block independent and $I(X:Y\downarrow Z)=I(X:Y|\overline{Z})$, then choose $\overline{Z}$ and $W=J_{XY|\overline{Z}}$ in the minimization of Eq.~\eqref{Eq:WinterKC} to obtain $K_C(p_{XYZ})=I(X:Y\downarrow Z)$.
\end{proof}


\noindent\textit{A Class of Reversible Distributions.}  We have seen that $K_D(p_{XYZ})=I(X:Y\downarrow Z)$ for UBI-PD$\downarrow$ distributions.  Since these distributions admit a channel $\overline{Z}|Z$ with $p_{XY\overline{Z}}$ being BI, Lemma \ref{Lem:Structure1} gives that $K_D(p_{XYZ})=K_C(p_{XYZ})$ for every UBI-PD$\downarrow$ distribution.  We have thus identified a family of distributions possessing reversible secrecy, and we conjecture that this family completely characterizes secrecy reversibility in the classical setting.  The conjecture obviously holds true for any distribution with $0=K_C(p_{XYZ})=K_D(p_{XYZ})$ since $K_C(p_{XYZ})=0$ implies $I(X:Y\downarrow Z)=0$ by Lemma \ref{Lem:Structure1}, and any distribution satisfying the latter condition is UBI-PB$\downarrow$ by definition.  The conjecture can also be shown as true for distributions satisfying $\min\{|\mc{X}|,|\mc{Y}|\}=2$.  
\begin{theorem}
\label{Thm:Reversible-Structure-2x2}
If $\min\{|\mc{X}|,|\mc{Y}|\}=2$, then $K_C(p_{XYZ})=K_D(p_{XYZ})$ iff $p_{XYZ}$ is UBI-PD$\downarrow$.
\end{theorem}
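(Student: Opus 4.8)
The implication ``$p_{XYZ}$ UBI-PD$\downarrow$ $\Rightarrow$ $K_C(p_{XYZ})=K_D(p_{XYZ})$'' was established in the discussion above, so the plan is to prove the converse. Assume $K_C(p_{XYZ})=K_D(p_{XYZ})$; by \eqref{Eq:Dist-Intrinsic-Cost} this common value equals $I(X:Y\downarrow Z)$, and if it is zero then $p_{XYZ}$ is UBI-PD$\downarrow$ by definition, so assume $I(X:Y\downarrow Z)>0$ and fix a channel $\overline{Z}|Z$ attaining the minimum in $I(X:Y\downarrow Z)$. First I would record the structural consequences already available: since $K_C(p_{XYZ})=I(X:Y\downarrow Z)$, Lemma~\ref{Lem:Structure1} gives that $p_{XY\overline{Z}}$ is BI with $I(X:Y\downarrow Z)=H(J_{XY|\overline{Z}}|\overline{Z})=I(X:Y|\overline{Z})$; moreover the trivial channel is optimal for $p_{XY\overline{Z}}$ (any channel on $\overline{Z}$ is realizable from $Z$ and $\overline{Z}$ already minimizes the intrinsic information), and coarse-graining Eve from $Z$ to $\overline{Z}$ cannot decrease the distillable key, so together with $K_D(p_{XYZ})=I(X:Y\downarrow Z)$ one gets $K_D(p_{XY\overline{Z}})=I(X:Y\downarrow\overline{Z})=I(X:Y|\overline{Z})$.

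The decisive ingredient is the binary structure. Taking $|\mc{X}|=2$ without loss of generality, for each value $\bar z$ the maximal common function $J_{XY|\overline{Z}=\bar z}$ is a function of a single bit and hence is either trivial or equal to $X$ up to relabeling; combined with the BI identity $I(X:Y|J_{XY|\overline{Z}}\overline{Z})=0$ this forces an all-or-nothing dichotomy, namely that conditioned on each $\bar z$ either $X$ is a deterministic function of $Y$ (so $J_{XY|\overline{Z}=\bar z}=X$) or $X$ and $Y$ are independent (so $J_{XY|\overline{Z}=\bar z}$ is trivial). I would then verify that this dichotomy is preserved under conditioning on any public message that depends on $Y$ alone (resp.\ on $X$ alone), which is what keeps public discussion under control.

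Next I would invoke the results of Ref.~\cite{Chitambar-2014c} on the companion question of when $K_D$ meets the intrinsic-information bound. Applied to $p_{XY\overline{Z}}$, which is BI with $K_D(p_{XY\overline{Z}})=I(X:Y|\overline{Z})$, they provide a public-communication protocol $M$ realizing the rate $I(X:Y|\overline{Z})$; the binary dichotomy is exactly what lets me take $M$ so that $p_{(MX)(MY)(\overline{Z}M)}$ is UBI and $I(M:J_{XY|\overline{Z}}|\overline{Z})=0$, because, conditioned on $\overline{Z}$, the entire block structure is carried by at most one nontrivial bit, so the discussion either fully exposes that bit to both Alice and Bob or leaves it entirely private, and nothing ``partial'' --- which would be incompatible with uniformity, or with privacy against $\overline{Z}$ --- can remain. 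This establishes that $p_{XY\overline{Z}}$ is UBI-PD. Applied instead to $p_{XYZ}$ with the genuine Eve, the same results force the compatibility condition $I(Z:J_{XY|\overline{Z}}|M\overline{Z})=0$: were it to fail, $Z$ would let Eve strip off part of the key that survives against $\overline{Z}$ after $M$, pushing $K_D(p_{XYZ})$ strictly below $I(X:Y\downarrow Z)$, a contradiction. With the channel $\overline{Z}|Z$ and discussion $M$ now in hand, all the defining conditions of UBI-PD$\downarrow$ hold.

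The main obstacle is this last part: extracting from the bare equalities $K_C=K_D=I(X:Y\downarrow Z)$ an explicit public-discussion protocol with the required uniformity and privacy, and justifying that the genuine Eve $Z$ may be replaced by $\overline{Z}$. The binary hypothesis enters precisely here, rigidifying the conditional block structure --- each $J_{XY|\overline{Z}=\bar z}$ carries at most one bit --- so that the characterization of $K_D=I(X:Y\downarrow Z)$ in Ref.~\cite{Chitambar-2014c} can be put into the UBI-PD$\downarrow$ normal form. When $|\mc{X}|>2$ this rigidity is lost, the conditional common function can carry several bits, and the all-or-nothing behaviour of public discussion breaks down.
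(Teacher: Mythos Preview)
Your opening matches the paper: sufficiency was shown earlier, the zero case is trivial, Lemma~\ref{Lem:Structure1} makes $p_{XY\overline{Z}}$ BI for the minimizing channel, and the sandwich $K_D(p_{XYZ})\le K_D(p_{XY\overline{Z}})\le I(X:Y|\overline{Z})$ forces $K_D(p_{XY\overline{Z}})=I(X:Y|\overline{Z})$. Your binary dichotomy---each slice $p_{XY|\overline{Z}=\bar z}$ is either uncorrelated or has $H(X|Y,\overline{Z}=\bar z)=0$---is also exactly what the paper uses.

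The genuine gap is in how you invoke Ref.~\cite{Chitambar-2014c}. You describe it as furnishing a public-discussion protocol $M$ that realizes the rate $I(X:Y|\overline{Z})$, but the result the paper actually imports (Proposition~\ref{Prop:DistOpt1-2x2} in the main text, and its $2\times n$ extension Lemma~\ref{Lem:DistOpt1-2xn} in the supplement) is an \emph{obstruction}: if a certain overlap pattern occurs among the conditional slices---namely a pair $(x,y)$ with $p(x,y|z_1)>0$, $p(x,y|z_0)=0$, yet $p(x|z_0)p(y|z_0)>0$---then $K_D<I(X:Y|Z)$ strictly. The paper applies this contrapositively to $p_{XY\overline{Z}}$, and what falls out is purely structural: whenever $y$ occurs in some \emph{correlated} slice $p_{XY|\overline{Z}=\bar z}$, one must already have $H(X|Y=y)=0$ unconditionally. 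In the $2\times 2$ case this forces $H(X|Y)=H(Y|X)=0$ outright, so $p_{XY\overline{Z}}$---and hence $p_{XYZ}$ itself, since $\overline{Z}$ is post-processing of $Z$---is UBI with no discussion at all. In the $2\times n$ case the paper then \emph{writes down} the single Bob-message $M(y)=\mathbf{1}[H(X|Y{=}y)=0]$ and verifies the UBI-PD conditions by inspection.

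Your plan never reaches this structural pinning-down, so the protocol $M$ you appeal to remains hypothetical, and your closing argument for $I(Z:J_{XY|\overline{Z}}|M\overline{Z})=0$ (``were it to fail, Eve would strip off part of the key'') is not a proof---nothing you have in hand prevents $Z$ from carrying more information about $J_{XY|\overline{Z}}$ than $\overline{Z}$ does after $M$. In the paper this issue evaporates because the structural conclusion is strong enough that the required checks become direct (indeed, in $2\times 2$ no $M$ and no $\overline{Z}$ are needed at the end). The missing idea is to use the cited result as a structural no-go applied contrapositively, not as a source of protocols.
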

\begin{proof}
Here we prove the theorem for when $|\mc{X}|=|\mc{Y}|=2$, and the more general case is handled in the Supplemental Material.  Crucial to our argument is a necessary structural condition recently proven for distributions satisfying $K_D(p_{XYZ})=I(X:Y|Z)$ \cite{Chitambar-2014c}.  
\begin{proposition}[\cite{Chitambar-2014c}]
\label{Prop:DistOpt1-2x2}
When $|X|=|Y|=2$ and there exists a pair $(x,y)$ such that $p(x,y|z_1)p(x|z_0)p(y|z_0)>0$ but $p(x,y|z_0)=0$ for some $z_0,z_1\in\mc{Z}$, then $K_D(p_{XYZ})<I(X:Y|Z)$.
\end{proposition}
\noindent Continuing with the proof of Theorem \ref{Thm:Reversible-Structure-2x2} in the $2\times 2$ case, from the previous discussion it suffices to prove necessity when $K_C(p_{XYZ})=K_D(p_{XYZ})>0$.  Then by Lemma \ref{Lem:Structure1}, for some $\overline{Z}|Z$, $p_{XY\overline{Z}}$ must be block independent and $K_D(p_{XYZ})=I(X:Y|\overline{Z})$.  However, since $K_D(p_{XYZ})\leq K_D(p_{XY\overline{Z}})\leq I(X:Y|\overline{Z})$, we see that $ K_D(p_{XY\overline{Z}})= I(X:Y|\overline{Z})$.  Then from Proposition \ref{Prop:DistOpt1-2x2}, the structure of BI distributions, and the fact that $H(J_{XY|\overline{Z}}|\overline{Z}=z)>0$ for some $z$, we have that $H(X|Y)=H(Y|X)=0$; i.e. $p_{XY\overline{Z}}$ is UBI and, up to a relabeling, has the form $p(x,y,z)=\delta_{xy}[xq(z)+(1-x)(1-q(z))]$.  Since $\overline{Z}$ is obtained by processing $Z$, $p_{XY\overline{Z}}$ can have this correlated form only if $p_{XYZ}$ likewise does.  Thus, $p_{XYZ}$ is UBI.
\end{proof}

\noindent\textit{Reversible Distributions Embedded in Quantum States.}  We now consider embedding reversible distributions into quantum states as in Eq. \eqref{Eq:qqqEmbedding}.  In particular, we focus on distributions with $|\mc{X}|=|\mc{Y}|=2$ so that the corresponding $\rho_{AB}:=\tr_E\op{\Psi}{\Psi}_{ABE}$ is a two-qubit state.  We can make a comparison between the secret key of the underlying distribution and the entanglement of the embedded quantum state using an analytic formula for the entanglement of formation $E_F$\cite{Wootters-1998a}.  The following relatively straightforward calculation is carried out in the Supplemental Material. 
\begin{theorem}
\label{Thm:Gap2x2}
For reversible $p_{XYZ}$ with $|\mc{X}|=|\mc{Y}|=2$ and $K_D(p_{XYZ})>0$:
\begin{align}
K_D(p_{XYZ})&=\sum_{z\in\mc{Z}}p(z)\mathsf{E}\left(2\sqrt{p(0|z)p(1|z)}\right)\notag\\
E_F(\rho_{AB})&=\mathsf{E}\left(2\sum_{z\in\mc{Z}}p(z)\sqrt{p(0|z)p(1|z)}\right),
\end{align}
where $\mathsf{E}(x):=h(\tfrac{1}{2}[1-\sqrt{1-x^2}])$ is strictly convex in $x$ for $h(x):=-x\log x-(1-x)\log(1-x)$.  The equality $K_D(p_{XYZ})=E_F(\rho_{AB})$ holds iff $H(X|Z=z)$ is constant for all $z\in\mc{Z}$. 
\end{theorem}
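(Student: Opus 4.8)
The plan is to use Theorem~\ref{Thm:Reversible-Structure-2x2} to pin down the structure of $p_{XYZ}$, reduce each of the two displayed identities to a one-shot computation, and compare them via Jensen's inequality. First, since $p_{XYZ}$ is reversible with $|\mc{X}|=|\mc{Y}|=2$ and $K_D(p_{XYZ})>0$, the argument proving Theorem~\ref{Thm:Reversible-Structure-2x2} shows $p_{XYZ}$ is UBI with $X=Y$ holding almost surely, so after relabeling $p(x,y,z)=\delta_{xy}\,p(x,z)$ with $x,y\in\{0,1\}$. For such a UBI distribution $K_D(p_{XYZ})=I(X:Y|Z)=H(X|Z)=\sum_{z\in\mc{Z}}p(z)\,h(p(0|z))$, the middle equality holding since $X=Y$ forces $H(X|YZ)=0$. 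It remains to note that for a binary variable $1-4p(0|z)p(1|z)=(p(0|z)-p(1|z))^{2}$, so $\tfrac12\bigl(1-\sqrt{1-(2\sqrt{p(0|z)p(1|z)})^{2}}\bigr)=\min\{p(0|z),p(1|z)\}$ and hence $\mathsf{E}\bigl(2\sqrt{p(0|z)p(1|z)}\bigr)=h(p(0|z))$, which gives the first formula.

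Next, embedding $p_{XYZ}$ as in Eq.~\eqref{Eq:qqqEmbedding} produces $\ket{\Psi}_{ABE}=\sum_{x,z}\sqrt{p(x,z)}\,\ket{xxz}$, and tracing out $E$ gives $\rho_{AB}=\sum_{z\in\mc{Z}}p(z)\,\op{\phi_z}{\phi_z}$ with $\ket{\phi_z}=\sqrt{p(0|z)}\ket{00}+\sqrt{p(1|z)}\ket{11}$. Since every $\ket{\phi_z}$ lies in $\mathrm{span}\{\ket{00},\ket{11}\}$, the state $\rho_{AB}$ is a two-qubit state supported on that subspace, so Wootters' formula gives $E_F(\rho_{AB})=\mathsf{E}(C(\rho_{AB}))$, the function $\mathsf{E}$ of the theorem being exactly Wootters' function (using $h(t)=h(1-t)$). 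I would compute $C(\rho_{AB})$ in the ordered basis $(\ket{00},\ket{11})$, where $\rho_{AB}$ is the real symmetric matrix with diagonal $(p_X(0),p_X(1))$ and off-diagonal $c:=\sum_{z}p(z)\sqrt{p(0|z)p(1|z)}$ while $\sigma_y\otimes\sigma_y$ acts as $-\sigma_x$, so that $\tilde\rho_{AB}$ has the diagonal entries of $\rho_{AB}$ swapped. A short calculation then shows the nonzero eigenvalues of $\rho_{AB}\tilde\rho_{AB}$ are $\bigl(\sqrt{p_X(0)p_X(1)}\pm c\bigr)^{2}$, so the ordered square-roots are $\lambda_1=\sqrt{p_X(0)p_X(1)}+c\ge\lambda_2=\bigl|\sqrt{p_X(0)p_X(1)}-c\bigr|\ge\lambda_3=\lambda_4=0$ and $C(\rho_{AB})=2\min\{\sqrt{p_X(0)p_X(1)},\,c\}$. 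Cauchy--Schwarz applied to $\bigl(\sqrt{p(z)p(0|z)}\bigr)_z$ and $\bigl(\sqrt{p(z)p(1|z)}\bigr)_z$ gives $c\le\sqrt{p_X(0)p_X(1)}$, hence $C(\rho_{AB})=2c$ and $E_F(\rho_{AB})=\mathsf{E}\bigl(2\sum_{z}p(z)\sqrt{p(0|z)p(1|z)}\bigr)$.

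Finally, setting $c_z:=2\sqrt{p(0|z)p(1|z)}$ the two identities read $K_D(p_{XYZ})=\sum_{z}p(z)\,\mathsf{E}(c_z)$ and $E_F(\rho_{AB})=\mathsf{E}\bigl(\sum_{z}p(z)\,c_z\bigr)$, so strict convexity of $\mathsf{E}$ and Jensen's inequality give $E_F(\rho_{AB})\le K_D(p_{XYZ})$ with equality iff $c_z$ is constant on $\{z:p(z)>0\}$. Because $X$ is binary, $c_z$ is constant iff $p(0|z)p(1|z)$ is constant iff $\{p(0|z),p(1|z)\}$ is a fixed unordered pair, which holds iff $h(p(0|z))=H(X|Z=z)$ is constant; this is the claimed equality condition.

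I expect the concurrence computation in the second step to be the only delicate point: one must order the square-roots of the eigenvalues of $\rho_{AB}\tilde\rho_{AB}$ correctly, and this is precisely where the Cauchy--Schwarz bound $c\le\sqrt{p_X(0)p_X(1)}$ is needed to conclude that the minimum defining $C(\rho_{AB})$ is attained at $c$. The remaining steps are routine once the $\delta_{xy}$ structure supplied by Theorem~\ref{Thm:Reversible-Structure-2x2} is available.
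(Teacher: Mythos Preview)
Your proposal is correct and follows essentially the same route as the paper's own argument: invoke Theorem~\ref{Thm:Reversible-Structure-2x2} to reduce to the perfectly correlated form $p(x,y,z)=\delta_{xy}p(x,z)$, identify $\rho_{AB}$ with a real $2\times 2$ matrix on $\mathrm{span}\{\ket{00},\ket{11}\}$, compute the eigenvalues of $\rho_{AB}\tilde\rho_{AB}$ there, use Cauchy--Schwarz to order them, and finish with strict convexity of $\mathsf{E}$. Your write-up is in fact slightly more careful than the paper's in two places---you explicitly verify $\mathsf{E}(2\sqrt{p(0|z)p(1|z)})=h(p(0|z))$ via the identity $1-4p(0|z)p(1|z)=(p(0|z)-p(1|z))^2$, and you make the role of Cauchy--Schwarz in resolving the $\min$ defining $C(\rho_{AB})$ explicit---but these are refinements of the same argument, not a different approach.
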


It is natural to wonder whether a quantum state with an embedded reversible distribution will likewise possess reversible entanglement.  However, one can already see in two qubits that this will not be true in general.  Every two-qubit embedded $\rho_{AB}$ with $K_D(p_{XYZ})>0$ will take the form $\rho_{AB}=\sum_z\sum_{j,j'=0}^1p(z)\sqrt{p(j|z)p(j'|z)}\op{jj}{j'j'}$.  This is a so-called maximally-correlated state for which entanglement reversibility is known to be lacking whenever $\rho_{AB}$ is not pure \cite{Cornelio-2011a, Vollbrecht-2004a}.  In fact, $E_F(\rho_{AB})$ is additive for the states of Theorem \ref{Thm:Gap2x2} \cite{Vidal-2002c}.  Thus,
\begin{corollary}
\label{Cor:ECEDgap}
When $|\mc{X}|=|\mc{Y}|=2$, any distribution with nonzero reversible secrecy will have nonzero reversible entanglement when embedded in a quantum state iff the embedded state is pure.  
\end{corollary}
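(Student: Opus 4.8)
The plan is to read the corollary off of Theorem~\ref{Thm:Gap2x2}, the explicit form of the embedded state recorded just above it, and the known characterization of entanglement reversibility for maximally-correlated states \cite{Cornelio-2011a, Vollbrecht-2004a}. Throughout I fix a reversible $p_{XYZ}$ with $K_C(p_{XYZ})=K_D(p_{XYZ})>0$ and $|\mc{X}|=|\mc{Y}|=2$; by Theorem~\ref{Thm:Reversible-Structure-2x2} the embedding $\rho_{AB}=\tr_E\op{\Psi}{\Psi}_{ABE}$ is the two-qubit maximally-correlated state $\rho_{AB}=\sum_{j,j'=0}^1 c_{jj'}\op{jj}{j'j'}$ with $c_{jj'}=\sum_{z\in\mc{Z}}p(z)\sqrt{p(j|z)p(j'|z)}$.

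First I would record that $K_D(p_{XYZ})>0$ already forces $\rho_{AB}$ to be entangled. From Theorem~\ref{Thm:Gap2x2}, $K_D(p_{XYZ})=\sum_z p(z)\mathsf{E}(2\sqrt{p(0|z)p(1|z)})$, and since $\mathsf{E}(x)>0$ whenever $x>0$ (because then $\tfrac12[1-\sqrt{1-x^2}]\in(0,\tfrac12]$ and $h$ is positive there), positivity of $K_D$ means $p(0|z)p(1|z)>0$ for at least one $z$, so the off-diagonal element $c_{01}$ is strictly positive. A two-qubit maximally-correlated state is separable iff it is diagonal, hence $\rho_{AB}$ is entangled and in particular $E_C(\rho_{AB})>0$. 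This step is really the only place the hypothesis $K_D>0$ is used: it excludes the degenerate possibilities ($X$ deterministic, or $X$ a function of $Z$) that would make $\rho_{AB}$ a product state or a separable classically-correlated state.

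For the ``if'' direction I would suppose $\rho_{AB}$ is pure. Being maximally-correlated and rank one, it must equal (up to local unitaries) $\ket{\phi}_{AB}=\sqrt{p(0)}\ket{00}+\sqrt{p(1)}\ket{11}$, since a rank-one positive semidefinite matrix is determined by its diagonal and $c_{jj}=p(j)$. By the previous step $p(0),p(1)\in(0,1)$, so $\rho_{AB}$ is an entangled pure state; pure-state reversibility \cite{Bennett-1996b} then gives $E_D(\rho_{AB})=E_C(\rho_{AB})=h(p(0))>0$, i.e.\ $\rho_{AB}$ has nonzero reversible entanglement. For the ``only if'' direction I would argue the contrapositive: if $\rho_{AB}$ is not pure it is a genuinely mixed maximally-correlated state which, by the first step, is entangled; the cited results \cite{Cornelio-2011a, Vollbrecht-2004a} (together with additivity of $E_F$ on this family \cite{Vidal-2002c}, which pins $E_C=E_F$) say such a state is never reversible, so $E_D(\rho_{AB})<E_C(\rho_{AB})$ and $\rho_{AB}$ does not have reversible entanglement.

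I do not expect a genuine obstacle: once Theorems~\ref{Thm:Reversible-Structure-2x2} and \ref{Thm:Gap2x2} are in hand the argument is essentially bookkeeping, and the one external input---that a maximally-correlated state satisfies $E_D=E_C$ exactly when it is pure---is quoted rather than reproved. The only point that needs a little care is the first step, namely checking that $K_D>0$ rules out the boundary cases in which $\rho_{AB}$ degenerates to an unentangled state, since that is what makes ``nonzero'' and ``reversible'' coincide with ``pure'' rather than with ``pure or trivial.''
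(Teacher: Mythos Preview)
Your argument is correct and follows essentially the same route as the paper: identify $\rho_{AB}$ as a two-qubit maximally-correlated state via Theorem~\ref{Thm:Reversible-Structure-2x2}, then invoke the cited results \cite{Cornelio-2011a, Vollbrecht-2004a, Vidal-2002c} that such states are entanglement-irreversible unless pure. Your added verification that $K_D>0$ forces $c_{01}>0$ (hence genuine entanglement) is a welcome bit of care that the paper leaves implicit.
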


\noindent\textit{Returning to Reversible Entanglement.}  We motivated our investigation into reversible secrecy by considering reversible entanglement in quantum pure states and asking for a classical analog.  This led  to the proposal of SBI distributions as being a type of ``classical pure state.''  Beyond pure states, the only known quantum mixed states demonstrating entanglement reversibility are the so-called locally-flagged states \cite{Horodecki-1998b, Vollbrecht-2004a, Cornelio-2011a, Horodecki-2009a}.  
By generalizing the type of states presented in \cite{Horodecki-1998b}, we say that $\sigma_{AB}$ is an \textit{LOCC-flagged} state if there exists an LOCC instrument  $(\mc{L}_m)_m$ (i.e.~a collection of CP maps generated by an LOCC protocol \cite{Chitambar-2014b}), with $m$ enumerating the different possible public messages of the protocol, such that
(i) $\sigma=\sum_m\mc{L}_m(\sigma)$ and (ii) $\tfrac{1}{p(m)}\mc{L}_m(\sigma)=\op{\varphi_m}{\varphi_m}$ is pure, where $p(m)=\|\mc{L}_m(\sigma)\|_1$.  For such states, $E_C(\sigma)=E_D(\sigma)=\sum_m p(m) S(\tr_A\op{\varphi_m}{\varphi_m})$.  

What is the classical analog of LOCC-flagged mixed states?  Care must be taken since in the definition of key cost, Eve must be able to use her part of $p_{XYZ}$ to simulate whatever public communication Alice and Bob use to generate their parts of $p_{XYZ}$ in a formation protocol \cite{Renner-2003a}.  Given the identification of an SBI distribution as a classical pure state, we say distribution $p_{XYZ}$ is an \textit{LOPC-flagged} state if there exists an LOPC instrument $(\mc{L}_m)_m$ (i.e.~a collection of substochastic maps generated by an LOPC protocol), with $m$ enumerating the different public messages of the protocol, such that (i) $p_{XYZ}=\sum_m\mc{L}_m(p_{XYZ})$, (ii) $\tfrac{1}{p(m)}\mc{L}_m(p_{XYZ})=p(x,y|m)p(z|m)$ is SBI, where $p(m)=\|\mc{L}_m(p_{XYZ})\|_1$, and (iii) $p(z|m)p(z|m')=0$ for $m\not=m'$.  This is formally analogous to the quantum scenario except for condition (iii), which captures the ability for Eve to reproduce the public communication from her information $Z$.  Any LOPC-flagged classical state takes the form
\begin{equation}
\label{Eq:LocalFlagged}
p(x,y,z)=\sum_{M=m} p(x,y|m)p(z|m)p(m)
\end{equation}
where $M$ is generated by a public communication protocol with $I(X:Y|J_{XY|M},M)=0$ and $H(M|Z)=0$.   It immediately follows from definition that these distributions are UBI-PD, but the converse is not true.

\medskip

\noindent\textit{Conclusions.}  We have presented a class of distributions UBI-PD$\downarrow$ that are conjectured to fully characterize reversible secrecy.  Despite the complexity of these distributions, validity of this conjecture would mean that reversibility of some distribution could be decided by a single-copy analysis.  Turning back to the analogous problem of entanglement reversibility in quantum states, one might then likewise hope for a solution on the single-copy level.  Only LOCC-flagged mixed states are known to possess entanglement reversibility, and these can indeed be identified by having a particular single-copy structure.  We have proposed a classical analog to LOCC-flagged states that likewise possess reversible secrecy, but these do not constitute the full set of reversible states.  Therefore, if only LOCC-flagged quantum states possess entanglement reversibility, then the analogous statement for secrecy in classical states would not be true.  On the other hand, if entanglement and secrecy are truly on equal footing in terms of reversibility characters, then our findings might suggest the existence of reversible entanglement beyond LOCC-flagged states.

\begin{acknowledgments}
EC thanks Matthias Christandl for a helpful discussion on the intrinsic information and key cost.  EC was supported by the National Science Foundation (NSF) Early CAREER Award No. 1352326. MH is supported by an ARC Future Fellowship under Grant FT140100574.
\end{acknowledgments}

\bibliography{QuantumBib}

\onecolumngrid
\appendix

\section{Properties of the G\'{a}cs-K\"{o}rner Common Information}
In this appendix, we prove the variable $J_{XY}$ that maximizes Eq. \eqref{Eq:GK-Common-Info1} is unique up to relabeling of its range.  To do this we give an alternative characterization of $J_{XY}$, directly reminiscent of that in \cite{Gacs-1973a}.  Let $X$ and $Y$ be random variables over finite sets $\mc{X}$ and $\mc{Y}$ respectively, with joint distribution $p_{XY}$. A \textit{common partitioning of length $t$} for $XY$ are pairs of subsets $(\mc{X}_i,\mc{Y}_i)_{i=1}^t$ such that 
\begin{itemize}
\item[(i)] $\mc{X}_i\cap\mc{X}_j=\mc{Y}_i\cap \mc{Y}_j=\emptyset$ for $i\not=j$, 
\item[(ii)] $p(\mc{X}_i|\mc{Y}_j)=p(\mc{Y}_i|\mc{X}_j)=\delta_{ij}$, and  
\item[(iii)] if $(x,y)\in\mc{X}_i\times \mc{Y}_i$ for some $i$, then $p_X(x)p_Y(y)>0$. 
\end{itemize}
For a given common partitioning, we refer to the subsets $\mc{X}_i\times \mc{Y}_i$ as the ``blocks'' of the partitioning.  The subscript $i$ merely serves to label the different blocks, and for any fixed labeling, we associate a random variable $J(X,Y)$ such that $J(x,y)=i$ if $(x,y)\in\mc{X}_i\times\mc{Y}_i$.  Note that each party can determine the value of $J$ from their local information, and it is therefore called a \textit{common function} of $X$ and $Y$ \cite{Gacs-1973a}.  A \textit{maximal common partitioning} is a common partitioning of greatest length.  
\begin{proposition}
\label{Prop:Partition-Unique}
\begin{itemize}
\item[{}]
\item[(a)]  Every pair of finite random variables $XY$ has a unique maximal common partitioning.
\item[(b)]  Variable $J_{XY}$ satisfies
\[H(J_{XY})=\max_K\{H(K):0=H(K|X)=H(K|Y)\}\]
iff $J_{XY}$ is a common function for the maximal common partitioning of $XY$.
\end{itemize}

\end{proposition}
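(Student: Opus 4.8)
The plan is to give an explicit combinatorial description of the maximal common partitioning and then read off both parts from it. Given $p_{XY}$, restrict to the supports $\mathcal{X}^+=\{x:p_X(x)>0\}$ and $\mathcal{Y}^+=\{y:p_Y(y)>0\}$, and build the bipartite graph $G_p$ on the disjoint union $\mathcal{X}^+\sqcup\mathcal{Y}^+$ with an edge between $x$ and $y$ precisely when $p(x,y)>0$. Let $(\mathcal{X}^\star_i,\mathcal{Y}^\star_i)_{i=1}^{t}$ be the connected components of $G_p$, with $\mathcal{X}^\star_i$ and $\mathcal{Y}^\star_i$ the $\mathcal{X}$- and $\mathcal{Y}$-vertices of component $i$. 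First I would check that this is a common partitioning: disjointness (i) and the support condition (iii) are immediate from the construction, while (ii) holds because every $y\in\mathcal{Y}^\star_j$ has all of its neighbours (the $x$ with $p(x,y)>0$) inside component $j$, so $p(\mathcal{X}^\star_j\mid y)=1$, and averaging over $\mathcal{Y}^\star_j$ (a positive-probability set) gives $p(\mathcal{X}^\star_i\mid\mathcal{Y}^\star_j)=\delta_{ij}$; the symmetric identity is proved the same way.

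Next I would show that an arbitrary common partitioning $(\mathcal{X}_k,\mathcal{Y}_k)_k$ is a coarsening of $(\mathcal{X}^\star_i,\mathcal{Y}^\star_i)_i$, i.e.\ each component lies inside a single block. The key point is that condition (ii) forces both endpoints of any edge of $G_p$ into the same block: if $p(x,y)>0$ and $y\in\mathcal{Y}_k$, then $p(\mathcal{X}_k\mid y)=1$ forces $x\in\mathcal{X}_k$. Walking along a path in $G_p$ then keeps one in a fixed block, so each connected component is contained in one block of $(\mathcal{X}_k,\mathcal{Y}_k)_k$. Hence the component partitioning has the greatest possible length, and any other maximal-length common partitioning must equal it as an unordered family of blocks; this is exactly part (a). It also follows that the common function $J_{XY}$ of the component partitioning is a valid candidate in \eqref{Eq:GK-Common-Info1} (it is determined block-wise by either coordinate, so $H(J_{XY}\mid X)=H(J_{XY}\mid Y)=0$) and that the common function of any common partitioning is a function of $J_{XY}$.

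For part (b) I would translate the entropic constraints into partitionings. If $H(K\mid X)=H(K\mid Y)=0$ then $K=f(X)=g(Y)$ almost surely for deterministic $f,g$, and $f(x)=g(y)$ on every pair with $p(x,y)>0$; setting $\mathcal{X}_k=\{x\in\mathcal{X}^+:f(x)=k\}$, $\mathcal{Y}_k=\{y\in\mathcal{Y}^+:g(y)=k\}$ produces a common partitioning whose common function agrees with $K$ off a null set. By the coarsening property just established, $K$ is then a function of $J_{XY}$, so $H(K)\le H(J_{XY})$; since $J_{XY}$ itself attains this value, the maximum in \eqref{Eq:GK-Common-Info1} is exactly $H(J_{XY})$, and any variable attaining it is a function of $J_{XY}$ with the same entropy, hence equal to $J_{XY}$ up to a relabeling — i.e.\ a common function of the maximal common partitioning. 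Conversely, any such common function attains the maximum by the first paragraph, completing the ``iff.''

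The one genuinely delicate step is the passage from the set-averaged form of condition (ii) to its per-symbol form ($p(\mathcal{X}_k\mid y)=1$ for individual $y$) in the presence of zero-probability symbols; this works because each summand in the conditional average is bounded by $1$ and, by (iii), every symbol appearing in a block has positive marginal, but it is the place where the bookkeeping must be done carefully. Everything else reduces to elementary manipulations with connected components.
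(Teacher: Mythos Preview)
Your argument is correct and takes a genuinely different route from the paper. The paper proves part (a) by contradiction: it assumes two distinct maximal common partitionings of the same length $t$, picks a block $\mc{X}_{i_0}$ of one that straddles at least two blocks of the other, and from the intersections carves out a strictly finer common partitioning of length $t+1$, contradicting maximality; for part (b) it then observes directly that any $K=f(X)=g(Y)$ must be constant on each block of this (now unique) maximal partitioning, so the entropy is maximized exactly when distinct blocks receive distinct labels. Your approach instead \emph{constructs} the canonical object up front---the connected components of the bipartite support graph $G_p$---and shows every common partitioning coarsens it by walking along edges. This buys you an explicit description of $J_{XY}$ (which the paper never writes down) and makes the ``$K$ is a function of $J_{XY}$'' step in (b) a formal consequence of the coarsening lemma rather than a separate observation. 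The paper's splitting argument is slightly more self-contained in that it never names the graph, but your version is cleaner and more informative; the only place requiring real care, as you note, is deducing the per-symbol statement $p(\mc{X}_k\mid y)=1$ from the block-averaged condition (ii), and your appeal to condition (iii) to guarantee strictly positive weights in the average is exactly what is needed.
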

\begin{proof}
(a) Trivially $\mc{X}\times\mc{Y}$ gives a common partitioning of length one, and any common partitioning cannot have length exceeding $\min\{|\mc{X}|,|\mc{Y}|\}$; hence a maximal common partitioning exists.  To prove uniqueness, suppose that
$(\mc{X}_i,\mc{Y}_i)_{i=1}^t$ and $(\mc{X}'_i,\mc{Y}_i')_{i=1}^t$ are two maximal common partitionings.  If they are not equivalent, then there must exist some subset, say $\mc{X}_{i_0}$ such that $\mc{X}_{i_0}\subset\cup_{\lambda=1}^K\mc{X}_\lambda'$ in which $\mc{X}_{i_0}\cap\mc{X}'_{\lambda}\not=\emptyset$ for $\lambda=1,\cdots,K\geq 2$.  Choose any such $\mc{X}'_{\lambda_0}$ from this collection and define the new sets $R_{i_0}=\mc{X}_{i_0}\cap\mc{X}'_{\lambda_0}$ and $\tilde{R}_{i_0}=\mc{X}_{i_0}\setminus\mc{X}'_{\lambda_0}$, which are both nonempty since $k\geq 2$ and the $\mc{X}_\lambda$ are disjoint.  However, we also have the properties
\begin{align}
x\in\mc{X}_{i_0}&\Rightarrow p(\mc{Y}_{i_0}|x)=1;& x\in\mc{X}'_{\lambda_0}&\Rightarrow p(\mc{Y}'_{\lambda_0}|x)=1; \notag\\
x\not\in\mc{X}_{i_0} &\Rightarrow p(\mc{Y}_{i_0}|x)=0;&x\not\in\mc{X}'_{\lambda_0} &\Rightarrow p(\mc{Y}'_{\lambda_0}|x)=0.\notag
\end{align}
(Here we are implicitly using condition (iii) in the above definition by assuming that $p(x)>0$ thereby defining conditional distributions).  Therefore, $p(S_{i_0}|R_{i_0})=p(\tilde{S}_{i_0}|\tilde{R}_{i_0})=1$ and $p(S_{i_0}|\tilde{R}_{i_0})=p(\tilde{S}_{i_0}|R_{i_0})=0$, where $S_{i_0}=\mc{Y}_{i_0}\cap\mc{Y}'_{\lambda_0}$ and $\tilde{S}_{i_0}=\mc{Y}_{i_0}\setminus\mc{Y}'_{\lambda_0}$.  A similar argument shows that $p(R_{i_0}|S_{i_0})=p(\tilde{R}_{i_0}|\tilde{S}_{i_0})=1$ and $p(R_{i_0}|\tilde{S}_{i_0})=p(\tilde{R}_{i_0}|S_{i_0})=0$.  Hence, $(\mc{X}_i,\mc{Y}_i)_{i\not=i_0}^t\bigcup (S_{i_0},R_{i_0})\bigcup (\tilde{S}_{i_0},\tilde{R}_{i_0})$ is a common partitioning of length $t+1$.  But this is a contradiction since $(\mc{X}_i,\mc{Y}_i)_{i=1}^t$ is a maximal common decomposition.

(b)  Suppose that $K$ satisfies $0=H(K|X)=H(K|Y)$ so that $K=f(X)=g(Y)$ for some functions $f$ and $g$.  It is clear that $f$ and $g$ must be constant-valued for any pair of values taken from same block $\mc{X}_i\times\mc{Y}_i$ in the maximal common partitioning of $XY$.  Hence the maximum possible entropy of $K$ is then attained iff $f$ and $g$ take on a different value for each block in this partitioning.
\end{proof}

We now turn to the conditional common information $J_{XY|Z}$.  We are specifically interested in the how this information evolves under LOPC for block independent distributions.  The following provides a derviation of Eq. \eqref{Eq:CCI-Messages-Equality}.
\begin{proposition}
\label{Prop:CCI-Messages}
If $p_{XYZ}$ is BI, then so is $p_{(MX)(MY)(ZM)}$.  Moreover,
\begin{align}
I(X:Y|ZM)=I(X:Y|Z)-I(M:J_{XY|Z}|Z).
\end{align}
\end{proposition}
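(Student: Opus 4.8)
The plan is to prove the two claims in order: first that BI is preserved under one round of public communication, then the entropic identity. For the first part, I would exploit the round-by-round structure of a public communication protocol: since $M=(M_1,\dots,M_r)$ with each $M_k$ a function of $X$ (resp. $Y$) together with the previous messages, it suffices by induction to treat a single message, say $M_1=f(X)$ appended to everyone's data (the case $M_1=g(Y)$ is symmetric). So I would reduce to showing: if $p_{XYZ}$ is BI and $M=f(X)$, then $p_{(XM)(YM)(ZM)}$ is BI. Here the key observation is that conditioning on $(Z=z,M=m)$ just restricts the block decomposition of $p_{XY|Z=z}$ to those blocks consistent with $f(x)=m$; because $M$ is a function of $X$, each block $\mc{X}_i\times\mc{Y}_i$ of the maximal common partitioning of $p_{XY|Z=z}$ has $f$ constant on $\mc{X}_i$, so the blocks simply get sorted by the value of $m$ and the conditional distribution $p_{XY|Z=z,M=m}$ retains a block-independent form. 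Hence $I(X:Y\,|\,J_{XY|ZM}\,ZM)=0$, i.e. $p_{(XM)(YM)(ZM)}$ is BI. I would also record that $J_{XY|ZM}$ refines $J_{XY|Z}$ in the sense that $H(J_{XY|Z}\,|\,J_{XY|ZM}\,ZM)=0$, since knowing the finer partition block plus $(z,m)$ determines the coarser one; this is the fact that drives the second part.

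For the entropic identity, the strategy is to compute $I(X:Y|ZM)$ via the common function. Since $p_{(XM)(YM)(ZM)}$ is BI, we have $H(J_{XY|ZM}\,|\,ZM)=I(X:Y|ZM)$ — this is exactly the characterization used in Lemma~\ref{Lem:Structure1} (a BI distribution has $I(X:Y|J_{XY|\overline Z}\,\overline Z)=0$, hence $I(X:Y|\overline Z)=H(J_{XY|\overline Z}\,|\,\overline Z)$), applied with $\overline Z\to ZM$. Likewise $I(X:Y|Z)=H(J_{XY|Z}|Z)$. So the claim reduces to
\[
H(J_{XY|ZM}\,|\,ZM)=H(J_{XY|Z}\,|\,Z)-I(M:J_{XY|Z}\,|\,Z).
\]
Now I would expand using the chain rule and the refinement fact. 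Write $J':=J_{XY|ZM}$ and $J:=J_{XY|Z}$. Because $J$ is determined by $(J',Z,M)$, we have $H(J'\,|\,ZM)=H(J'J\,|\,ZM)=H(J\,|\,ZM)+H(J'\,|\,JZM)$. On the other hand $H(J\,|\,ZM)=H(J\,|\,Z)-I(J:M\,|\,Z)$ directly from the definition of conditional mutual information. Combining, $H(J'\,|\,ZM)=H(J\,|\,Z)-I(J:M\,|\,Z)+H(J'\,|\,JZM)$, so the identity is equivalent to $H(J'\,|\,JZM)=0$ — but $J'$ is a function of $(X,Z,M)$ and also of $(Y,Z,M)$, and I claim it is a function of $(J,Z,M)$ as well. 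This last point needs the observation that within a fixed block of the coarser partition (at fixed $z$), the finer partition at $(z,m)$ is obtained purely by splitting according to $m=f(x)$; but on a single block $\mc{X}_i\times\mc{Y}_i$ the value $f(x)$ need not be constant, so the finer block is genuinely $m$-dependent but, given $m$ and the coarse block label, uniquely determined — i.e. $J'$ is a function of $(J,Z,M)$. Hence $H(J'|JZM)=0$ and the identity follows.

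The main obstacle is the structural claim underpinning both halves: that conditioning a BI distribution on a message which is a local function refines the maximal common partitioning block-by-block in the controlled way described, and in particular that $J_{XY|ZM}$ is a function of $(J_{XY|Z},Z,M)$. This requires care because the maximal common partitioning is only defined up to relabeling and because a message that is a function of $X$ can split a $Y$-side block in a way that is not obviously "local" on the $Y$ side until one invokes property (ii) of common partitionings (that $p(\mc{Y}_i|\mc{X}_j)=\delta_{ij}$), which forces the conditional distribution $p_{XY|Z=z,M=m}$ to remain supported on a union of sub-blocks that are themselves a common partitioning. I would isolate this as a lemma about how common partitionings behave under restriction to a local event, prove it using the uniqueness result of Proposition~\ref{Prop:Partition-Unique}(a) (restricting a common partitioning and then taking the maximal one), and then feed it into the induction over communication rounds. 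Everything after that lemma is the routine entropy bookkeeping sketched above.
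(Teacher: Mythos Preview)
Your approach is sound in outline but differs from the paper's, and there is one incorrect claim worth flagging.

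The paper does the two parts in the \emph{opposite} order: it first derives the entropic identity by a short chain-rule computation, and only then reads off the structural facts. Concretely, for a single Alice message $M_1$ (so $YZ-X-M_1$), the paper expands
\begin{align*}
I(X:Y|ZM_1)&=I(M_1X:Y|Z)-I(M_1:Y|Z)\\
&=I(X:Y|Z)-I(M_1:J_{XY|Z}Y|Z)\\
&=I(X:Y|Z)-I(M_1:J_{XY|Z}|Z),
\end{align*}
the last step using $I(M_1:Y|J_{XY|Z}Z)\le I(X:Y|J_{XY|Z}Z)=0$ from the BI hypothesis. This already gives the identity; it also equals $H(J_{XY|Z}|ZM_1)$, and since $J_{XY|Z}$ remains a common function of $X$ and $Y$ given $ZM_1$, the characterization ``$H(K|Z)\le I(X:Y|Z)$ with equality iff BI and $K=J_{XY|Z}$'' forces $p_{(XM_1)(YM_1)(ZM_1)}$ to be BI with $J_{XY|ZM_1}=J_{XY|Z}$ up to relabeling. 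Induction finishes. No direct analysis of partitions is needed.

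Your route instead analyzes the partition first. The logic works, but the sentence ``because $M$ is a function of $X$, each block $\mc{X}_i\times\mc{Y}_i$\ldots\ has $f$ constant on $\mc{X}_i$'' is false (take a single block with $|\mc{X}_i|\ge 2$ and $f$ nonconstant there); you yourself contradict it later. The correct picture is that conditioning on $M=m$ \emph{restricts} each block to $(\mc{X}_i\cap f^{-1}(m))\times\mc{Y}_i$ (some blocks vanish), and within a surviving block conditional independence of $X$ and $Y$ is preserved because $M$ is a function of $X$. So the blocks neither ``sort'' nor ``split'': the maximal common partition at $(z,m)$ carries the same block labels as at $z$, just with restricted support. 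In particular $J_{XY|ZM}$ equals $J_{XY|Z}$ up to relabeling (not merely a refinement), which makes your $H(J'|JZM)=0$ immediate. Once this is corrected your entropy bookkeeping goes through. The paper's entropic-first route is shorter and sidesteps the structural lemma you propose; your partition-first argument has the merit of exhibiting the invariance $J_{XY|ZM}=J_{XY|Z}$ directly.
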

\begin{proof}
For a general distribution $p_{XYZ}$, it is easy to see that $H(K|Z)\leq I(X:Y|Z)$ whenever $H(K|XZ)=H(K|YZ)=0$.  Equality is obtained iff $p_{XYZ}$ is BI, and by uniqueness of the maximal conditional common function, we have that $K=J_{XY|Z}$ up to relabeling.

Now, suppose that $p_{XYZ}$ is BI and Alice locally generates message $M_1$ so that $YZ-X-M_1$.  Then
\begin{align}
\label{Eq:BI-Messages}
I(X:Y|ZM_1)&=I(M_1X:Y|Z)-I(M_1:Y|Z)\notag\\
&=I(X:Y|Z)-I(M_1:J_{XY|Z}Y|Z)\notag\\
&=I(X:Y|Z)-I(M_1:J_{XY|Z}|Z)\notag\\
&=H(J_{XY|Z}|Z)-[H(J_{XY|Z}|Z)-H(J_{XY|Z}|ZM_1)]\notag\\
&=H(J_{XY|Z}|ZM_1).
\end{align}
Since $H(J_{XY|Z}|XM_1)=H(J_{XY|Z}|YM_1)=0$, by the above discussion it follows that $p_{(XM)(YM)(ZM)}$ is BI and $J_{XY|ZM_1}$ is essentially equivalent to $J_{XY|Z}$; i.e. up to relabeling $J_{XY|Z=z}=J_{XY|Z=z,M_1=m}$ for all $m$.  The third line of Eq. \eqref{Eq:BI-Messages} gives us the desired equality in the proposition for message $M_1$.  Proceeding by induction proves the full statement for a full message $M$ generated by an arbitrarily long communication protocol. 

\end{proof}

\section{A Hierarchy of Distribution Classes}

We review the various distributions classes introduced in the paper and give different examples.  The hierarchy of the distributions is the following:
\begin{figure}[h]
\centering
\includegraphics[scale=.75]{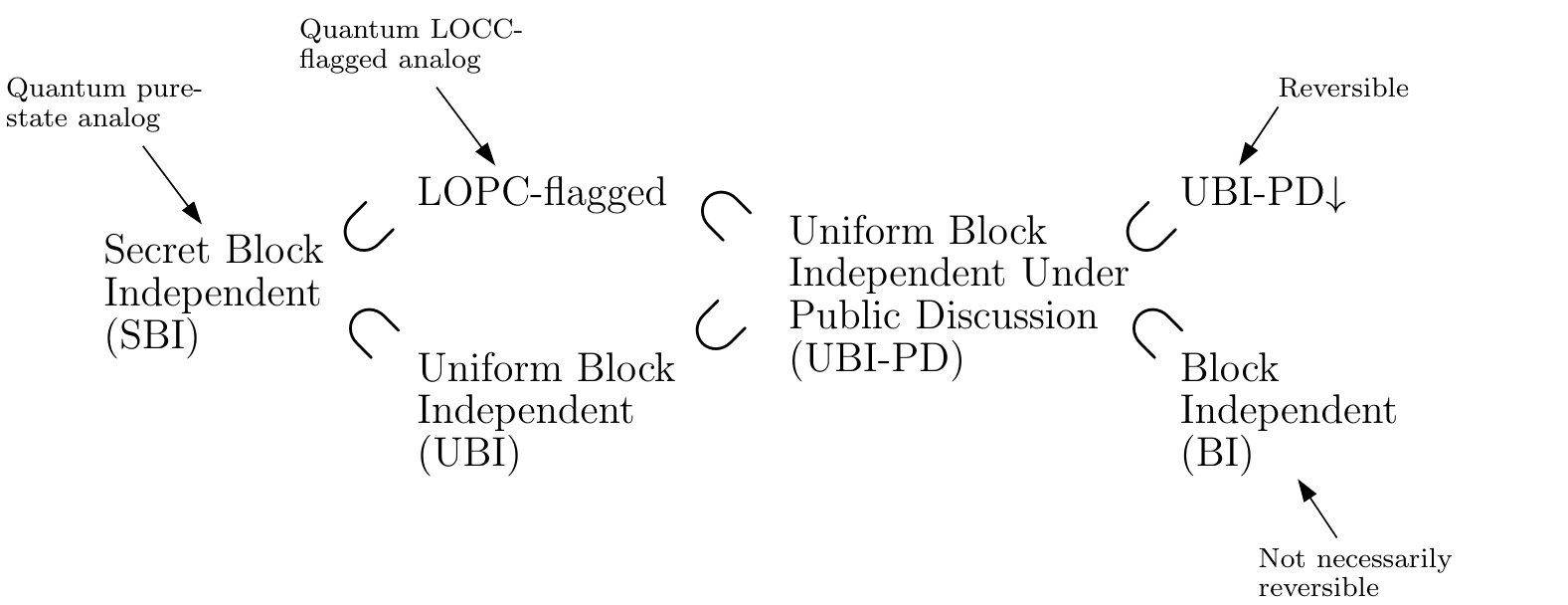}
\caption{A hierarchy of distribution classes and their relation to classes of reversible quantum states.}
\label{Fig:Dist_Hierarch}
\end{figure}

\noindent\textbf{Example Distributions:}
\begin{figure}[h]
\centering
     \includegraphics[scale=.75]{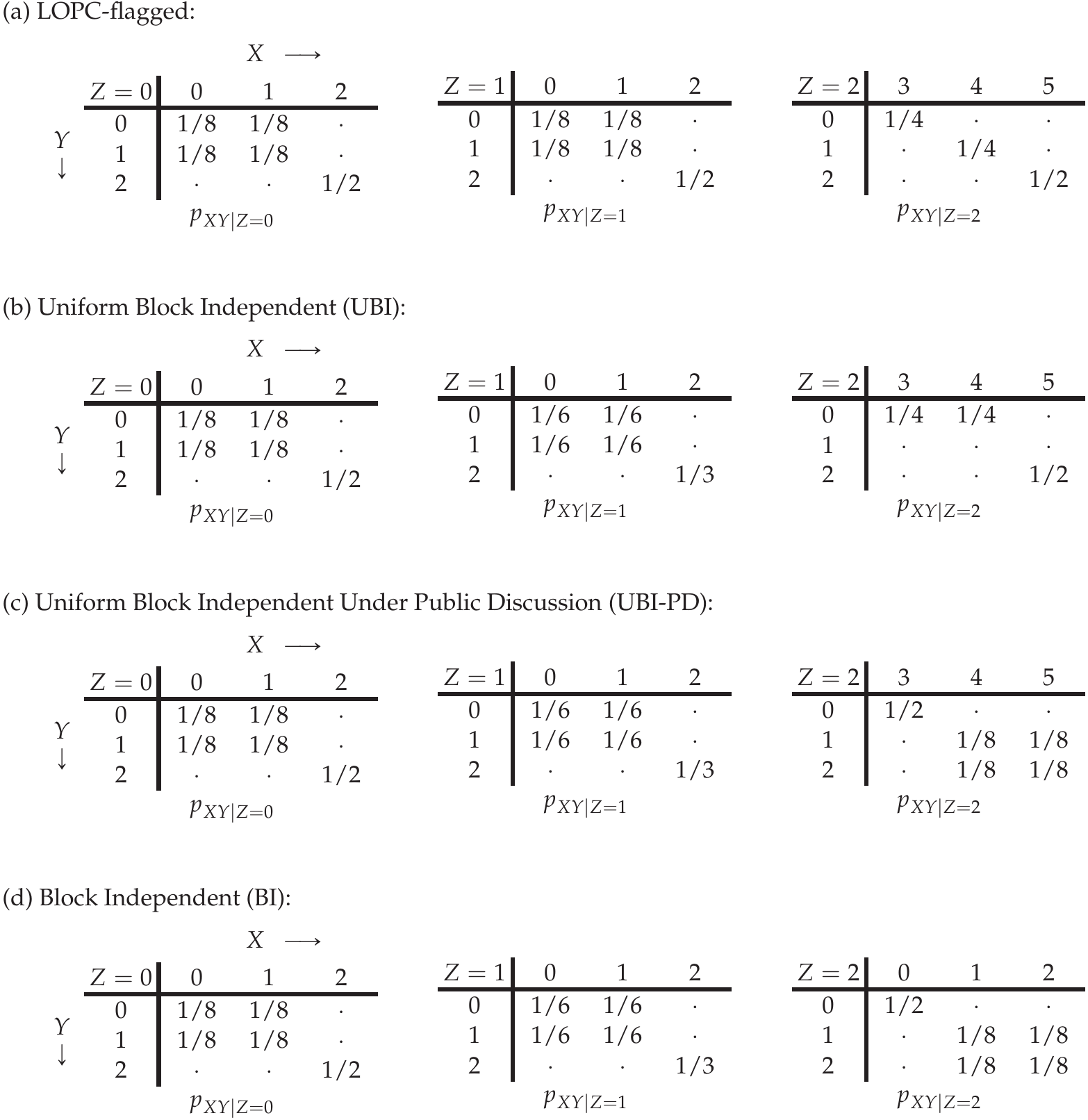}
     \caption{(a) is an LOPC-flagged distribution that is neither SBI nor UBI.  (b) is likewise a UBI distribution that is neither SBI nor LOPC-flagged.  (c) is a UBI-PD distribution that is neither UBI nor LOPC-flagged.  (d) is a BI distribution that is neither UBI-PD nor UBI-PD$\downarrow$.  Figure \ref{Fig:UBIPDdownarrow} gives a UBI-PD$\downarrow$ distribution that is not BI.}
		\label{Fig:Dist_Classes}
      \end{figure} 

\section{Conditional Double Markov Chain}

\begin{proposition}[Conditional Double Markov Chains (also Exercise 16.25 in \cite{Csiszar-2011a})]
\label{Prop:Double-Markov}
Random variables $WXYZ$ satisfy the two Markov chains $X-YZ-W$ and $Y-XZ-W$ iff $I(XY:W|J_{XY|Z}Z)=0$.
\end{proposition}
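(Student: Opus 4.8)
\medskip
\noindent\emph{Proof proposal.} The plan is to reduce the conditional statement to the standard unconditional double Markov chain characterization, and then to re-derive the latter directly from the structure of the maximal common partitioning of a bipartite distribution (Proposition~\ref{Prop:Partition-Unique}).

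First I would note that the whole statement decomposes over the values of $Z$. Indeed, $X-YZ-W$ holds iff $p(w|x,y,z)=p(w|y,z)$ for all $(x,y,z)$ with $p(x,y,z)>0$, which says exactly that the chain $X-Y-W$ holds in each conditional distribution $p_{XYW|Z=z}$; symmetrically, $Y-XZ-W$ is equivalent to $Y-X-W$ holding in every $p_{XYW|Z=z}$. Writing $I^{(z)}$ for mutual informations computed inside $p_{XYW|Z=z}$, one also has $I(XY:W|J_{XY|Z}Z)=\sum_{z:\,p(z)>0}p(z)\,I^{(z)}(XY:W|J_{XY|Z=z})$ with each term nonnegative. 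So it suffices to prove, for an arbitrary $p_{XY}$ and an arbitrary coupled variable $W$, that $X-Y-W$ and $Y-X-W$ both hold iff $I(XY:W|J_{XY})=0$.

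For the ``if'' direction: since $J_{XY}$ is a function of $Y$ by Eq.~\eqref{Eq:GK-Common-Info1}, $I(X:W|Y)=I(X:W|Y J_{XY})\le I(XY:W|J_{XY})=0$, so $X-Y-W$, and $Y-X-W$ follows by symmetry. The ``only if'' direction is the crux. From the two chains we get $p(w|x)=p(w|x,y)=p(w|y)$ for every $(x,y)$ with $p(x,y)>0$. Now take the maximal common partitioning $(\mc{X}_i,\mc{Y}_i)_i$ of $p_{XY}$ and, within a fixed block, consider the bipartite graph on $\mc{X}_i\sqcup\mc{Y}_i$ that has an edge for each pair of positive probability. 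This graph is connected: a disconnected block could be refined into two valid blocks, contradicting maximality, by the same indicator-function argument used in Proposition~\ref{Prop:Partition-Unique}(a). Propagating the equalities $p(w|x)=p(w|y)$ along paths of this connected graph shows that $p(w|x)$ and $p(w|y)$ depend on $x$, resp.\ $y$, only through the block index, hence $p(w|x,y)=p(w|J_{XY}(x,y))$ for all $(x,y)$ with $p(x,y)>0$. Since $J_{XY}$ is a deterministic function of $(X,Y)$, this is precisely $W-J_{XY}-XY$, i.e.\ $I(XY:W|J_{XY})=0$.

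The step I expect to be the main obstacle is the careful bookkeeping of zero-probability entries: the path-propagation argument must only invoke conditionals $p(w|x)$, $p(w|y)$, $p(w|x,y)$ at outcomes whose conditioning event has positive probability, and one must verify that peeling a connected component off a block really yields a common partitioning satisfying (i)--(iii) of the Supplemental Material's definition. Both points are dispatched by the observation (already used for Proposition~\ref{Prop:Partition-Unique}(a)) that block-membership indicators are common functions, after which the argument is routine. As a shortcut, one may instead cite Exercise~16.25 of \cite{Csiszar-2011a} for the unconditional lemma and keep only the reduction over $Z$ from the second paragraph.
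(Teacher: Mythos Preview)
Your proof is correct and follows essentially the same route as the paper's: both handle the ``if'' direction by the chain rule (using that $J_{XY|Z}$ is a function of $Y$ given $Z$), and for the ``only if'' direction both argue that the double Markov chain forces $p(w|x,y,z)$ to be constant on each block of the maximal common partitioning of $p_{XY|Z=z}$. Your explicit bipartite-graph connectivity argument actually fills in a step the paper simply asserts (``Hence, the conditional distribution $p(w|x,y,z)$ is constant across each block''), and your initial reduction over $Z$ is a cosmetic repackaging of the paper's direct treatment with $Z$ carried along throughout.
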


\begin{proof}
If $I(XY:W|J_{XY|Z}Z)=0$ then $I(Y:W|J_{XY|Z}Z)=0$.  The Markov chain $X-YZ-W$ follows since 
\begin{align*}
I(XY:W|J_{XY|Z}Z)&=I(X:W|YJ_{XY|Z}Z)+I(Y:W|J_{XY|Z}Z)\notag\\
&=I(X:W|YZ)+I(Y:W|J_{XY|Z}Z),
\end{align*}
where we have use the fact that $J_{XY|Z}$ is a function $X$ and $Y$ when given $Z$.  A similar argument shows that $Y-XZ-W$.

On the other hand, if the two Markov chains hold, then whenever $p_{XYZ}(x,y,z)>0$, we have
\begin{equation}
p(W=w|x,y,z)=p(w|x,z)=p(w|y,z).
\end{equation}
Hence, the conditional distribution $p(w|x,y,z)$ is constant across each block $\mc{X}_i\times\mc{Y}_i$ in the maximal common partitioning of $P_{XY|Z=z}$.  Consequently,
\[p_{W|XYZ}=p_{W|J_{XY|Z}Z},\]
and so for any $J_{XY|Z}=j$ and $Z=z$ for which $p(j,z)>0$, we have
\begin{align}
p(x,y,w|j,z)&=p(w|x,y,j,z)p(x,y|j,z)\notag\\
&=p(w|x,y,z)p(x,y|j,z)=p(w|j,z)p(x,y|j,z).
\end{align}
Thus, $I(XY:W|J_{XY|Z}Z)=0$.
\end{proof}

\section{Reversibility Conditions when $\min\{|\mc{X}|,|\mc{Y}|\}=2$}

Here we generalize Theorem \ref{Thm:Reversible-Structure-2x2}.  To do so, we will need to reference a strengthened version of Proposition \ref{Prop:DistOpt1-2x2}, which first requires some new terminology.  For a distribution $p$, let $supp[p]$ denote its support; the set of elements for which $p$ assigns a nonzero probability.  For two distributions $p_{XY}$ and $q_{XY}$ over $\mc{X}\times \mc{Y}$, we say that $q_{XY}\blacktriangleleft p_{XY}$ if, up to a permutation between $X$ and $Y$, the distributions satisfy $supp[q_X]\subset supp[p_X]$ and one of the three additional conditions: (i) $q_{XY}$ is uncorrelated, (ii) $supp[q_Y]\subset supp[p_Y]$, or (iii) $y\in supp[q_Y]\setminus supp[p_Y]$ implies that $H(X|Y=y)=0$.
\begin{lemma}[\cite{Chitambar-2014c}]
\label{Lem:DistOpt1-2xn}
Let $p_{XYZ}$ be a distribution over $\mc{X}\times\mc{Y}\times\mc{Z}$ such that $p_{XY|Z=z_1}\blacktriangleleft p_{XY|Z=z_0}$ for some $z_0,z_1\in\mc{Z}$.  If there exists some pair $(x,y)\in supp[p_{X|Z=0}]\times supp[p_{Y|Z=0}]$ for which $p(x,y|z_1)>0$ but $p(x,y|z_0)=0$, then $K_D(p_{XYZ})<I(X:Y|Z)$.  
\end{lemma}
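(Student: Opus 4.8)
The plan is to argue the contrapositive. Since $K_D(p_{XYZ})\le I(X:Y\downarrow Z)\le I(X:Y|Z)$ always holds (Eq.~\eqref{Eq:Dist-Intrinsic-Cost}), it suffices to assume $K_D(p_{XYZ})=I(X:Y|Z)$ and show that no pair $(x,y)$ of the stated kind can exist. Under this assumption every inequality in the chain is tight, so the trivial processing $\overline Z=Z$ already attains the minimum in the intrinsic information; equivalently, \emph{no} channel $\overline Z|Z$ satisfies $I(X:Y|\overline Z)<I(X:Y|Z)$. The whole argument then reduces to exhibiting, out of a hypothetical bad pair $(x,y)$, exactly one such channel.

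The channel I would use is a coarse-graining that ``fuses'' $z_0$ and $z_1$: identify $z_0$ with $z_1$ and leave every other symbol untouched; should this not already suffice, refine it by first splitting $z_1$ according to the G\'acs--K\"orner block structure of $p_{XY|Z=z_0}$, routing to a fresh symbol those samples whose $Y$-value lies in the $z_0$-block of $x$ (and whose $X$-value lies in the $z_0$-block of $y$) and merging only the remainder with $z_0$. In either case the merged conditional $p_{XY|\overline Z=\overline z}$ contains the pair $(x,y)$ inside a single G\'acs--K\"orner block, whereas in $p_{XY|Z=z_0}$ the symbols $x$ and $y$ lay in two \emph{distinct} blocks --- which is precisely the content of $p(x|z_0)p(y|z_0)>0=p(x,y|z_0)$. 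Writing $W$ for the indicator of which pre-image produced $\overline z$ and $\lambda=p(z_0)/(p(z_0)+p(z_1))$, the effect of the fusion is captured by
\[ I(X:Y|\overline Z=\overline z)=\big[\lambda\,I(X:Y|z_0)+(1-\lambda)\,I(X:Y|z_1)\big]+I(W:Y|\overline z)-I(W:Y|X,\overline z) \]
together with its $X\leftrightarrow Y$ mirror, so a strict drop in $I(X:Y|\cdot)$ is equivalent to $I(W:Y|X,\overline z)>I(W:Y|\overline z)$ (or the symmetric statement). Now the event $\{X=x,\,Y=y\}$ occurs only for the value of $W$ coming from $z_1$, so conditioning on $X=x$ makes $Y=y$ a certificate for $W$; this contributes a strictly positive excess, and one checks --- case by case on the three clauses defining $q_{XY}\blacktriangleleft p_{XY}$ and using strict concavity of the binary entropy --- that the remaining $X$-values cannot cancel it: clause (i) because mixing in a product block only shrinks mutual information once marginals are accounted for, clause (ii) because it is the ``pure fusion'' situation above, and clause (iii) because the extra $Y$-symbols of $p_{XY|z_1}$ satisfy $H(X|Y=y')=0$ and can be relabeled away without re-creating correlations. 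This yields $I(X:Y|\overline Z)<I(X:Y|Z)$, contradicting tightness, hence $K_D(p_{XYZ})<I(X:Y|Z)$.

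Since only $\min\{|\mc X|,|\mc Y|\}=2$ is needed downstream, an alternative is available: coarse-grain the larger alphabet to a bit by $x\mapsto 0$ and (everything else)$\mapsto 1$. The image distribution then inherits the $2\times2$ obstruction of Proposition~\ref{Prop:DistOpt1-2x2}, so $K_D$ of the image is strictly below its conditional mutual information. To transport this back to $p_{XYZ}$ one needs the bound $K_D(p_{XYZ})-K_D(p_{X'YZ})\le I(X:Y|X'Z)$, saying that a local coarse-graining cannot destroy more key than it destroys mutual information; combined with $K_D(p_{XYZ})=I(X:Y|Z)$ this forces equality in the image, again contradicting Proposition~\ref{Prop:DistOpt1-2x2}.

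The step I expect to be the main obstacle is the strict-inequality bookkeeping in the second paragraph: verifying $I(X:Y|\overline Z)<I(X:Y|Z)$ \emph{uniformly} across all three clauses of $\blacktriangleleft$, and arranging that a single fusion (possibly preceded by one split) always suffices, so that the construction carries no hidden dependence on $|\mc Z|$. For the alternative route the crux is instead the key-destruction inequality $K_D(p_{XYZ})-K_D(p_{X'YZ})\le I(X:Y|X'Z)$, which itself requires a careful analysis of LOPC protocols and which I would invoke only as a fallback.
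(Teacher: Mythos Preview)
The paper does not prove this lemma at all: it is quoted verbatim from \cite{Chitambar-2014c} and then \emph{used} in the proof of Theorem~\ref{Thm:Reversible-Structure-2xn}. There is therefore nothing to compare your proposal against in this paper; the actual proof lives in the cited companion work, and the techniques there involve a direct analysis of multi-round LOPC protocols and the achievable key region rather than a reduction to the intrinsic information.

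On the substance of your sketch: your first route proves the (formally stronger) statement $I(X:Y\downarrow Z)<I(X:Y|Z)$ rather than $K_D<I(X:Y|Z)$. That is logically sufficient for the lemma, but it is not obvious that the hypotheses of the lemma always force a strict drop in intrinsic information; the three clauses of $\blacktriangleleft$ were crafted with protocol-level arguments in mind, not with ``merge $z_0$ and $z_1$'' in mind, and your case analysis for clauses (i) and (iii) is only gestured at. Note in particular that merging values of $Z$ can \emph{increase} $I(X:Y|\cdot)$, so the identity you wrote does not by itself give the sign you want without the promised bookkeeping. Your fallback route rests on the inequality $K_D(p_{XYZ})-K_D(p_{X'YZ})\le I(X:Y|X'Z)$, which is not standard and would itself require a nontrivial LOPC argument of the same flavor as the original proof. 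In short: the strategy is reasonable as a heuristic, but the parts you flag as ``the main obstacle'' are exactly where the work is, and the paper you are reading simply imports the result rather than carrying it out.
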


\noindent Using this lemma, we are able to provide a full solution to the secrecy reversibility problem when one of the parties has a binary random variable.

\begin{theorem}
\label{Thm:Reversible-Structure-2xn}
Suppose that $\min\{|\mc{X}|,|\mc{Y}|\}=2$.  Then $p_{XYZ}$ satisfies $K_D(p_{XYZ})=K_C(p_{XYZ})$ iff $p_{XYZ}$ is UBI-PD.
\end{theorem}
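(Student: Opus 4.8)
The plan is to establish the two implications separately. The ``if'' direction is a short consequence of Lemma~\ref{Lem:Structure1} and requires no cardinality hypothesis. The ``only if'' direction is the substantive one, and proceeds by feeding the structural constraint of Lemma~\ref{Lem:DistOpt1-2xn} into the block-counting afforded by a binary party.

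For the ``if'' direction, suppose $p_{XYZ}$ is UBI-PD. Then it is BI and, by construction of the class, Alice and Bob distill key at rate $H(J_{XY|Z}|Z)=I(X:Y|Z)$; since always $K_D\le I(X:Y\downarrow Z)\le I(X:Y|Z)$, this forces $K_D(p_{XYZ})=I(X:Y\downarrow Z)=I(X:Y|Z)$, so $\bar Z=Z$ is a minimizer of the intrinsic information. As $p_{XY\bar Z}=p_{XYZ}$ is BI, Lemma~\ref{Lem:Structure1} gives $K_C(p_{XYZ})=I(X:Y\downarrow Z)=K_D(p_{XYZ})$.

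For the ``only if'' direction, the case $K_C=K_D=0$ is immediate (such a distribution has $I(X:Y\downarrow Z)=0$ by Lemma~\ref{Lem:Structure1} and is UBI-PD by definition), so assume $K_C=K_D>0$. By Lemma~\ref{Lem:Structure1}, the minimizing channel $\bar Z|Z$ makes $p_{XY\bar Z}$ BI with $K_D(p_{XYZ})=I(X:Y|\bar Z)=H(J_{XY|\bar Z}|\bar Z)$, and since $K_D(p_{XYZ})\le K_D(p_{XY\bar Z})\le I(X:Y|\bar Z)$ we also obtain $K_D(p_{XY\bar Z})=I(X:Y|\bar Z)$ --- precisely the hypothesis under which Lemma~\ref{Lem:DistOpt1-2xn} applies to $p_{XY\bar Z}$. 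Take $|\mc X|=2$ without loss of generality. Each $p_{XY|\bar Z=z}$ is BI with a binary $X$-marginal, so its maximal common partitioning has at most two blocks; call $z$ \emph{active} when it has exactly two, and in that case relabel the blocks as $\{0\}\times\mc Y^z_0$ and $\{1\}\times\mc Y^z_1$ (only active $z$ contribute to $H(J_{XY|\bar Z}|\bar Z)$). The key observation is that a two-block conditional has $H(X\mid Y=y,\bar Z=z)=0$ for every $y$ in its support, so clause (iii) in the definition of $\blacktriangleleft$ is satisfied automatically; hence $p_{XY|\bar Z=z_1}\blacktriangleleft p_{XY|\bar Z=z_0}$ holds between active conditionals (modulo bookkeeping of support containments), and Lemma~\ref{Lem:DistOpt1-2xn} together with $K_D(p_{XY\bar Z})=I(X:Y|\bar Z)$ forbids any ``crossing'' pair $(x,y)$. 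Spelled out, this says exactly that the $\mc Y$-partitions coming from distinct active $z$ agree on the intersection of their supports once the blocks are named by the value of $X$ they contain; this produces a single coarse-graining $c(Y)$ of Bob's variable for which $J_{XY|\bar Z}=c(Y)=K_X(X)$ with probability one on the active part, where $K_X$ is the block label of $X$.

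The remaining work --- which I expect to be where essentially all of the effort lies --- is twofold. First, one must deal with the non-active $z$ at which $X$ is still non-constant: the support bipartite graph of such a conditional is connected and meets both values of $X$, hence contains a vertex adjacent to both, and playing this vertex against an active conditional manufactures a forbidden crossing pair (or forces a further constraint); the delicate point is to arrange matters so that the leftover conditionals are revealed by a public message $M$ with $I(M:J_{XY|\bar Z}|\bar Z)=0$, so that nothing leaks to Eve. Second, one must lift from $p_{XY\bar Z}$ back to $p_{XYZ}$: because $\bar Z$ is a processing of $Z$, the rigid ``aligned two-block'' shape of $p_{XY\bar Z}$ can only arise if $p_{XYZ}$ is already BI with compatible block structure, so the same discussion $M$ works at the level of $Z$ and $p_{XYZ}$ is UBI-PD. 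The hypothesis $\min\{|\mc X|,|\mc Y|\}=2$ is used only to cap the number of blocks at two, which is exactly what makes the alignment of partitions --- and with it the whole argument --- go through.
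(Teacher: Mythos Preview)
Your ``if'' direction is correct and matches the paper's argument. In the ``only if'' direction you reduce correctly to $K_D(p_{XY\bar Z})=I(X:Y|\bar Z)$ with $p_{XY\bar Z}$ BI, but then your chosen route --- first aligning block partitions across \emph{active} $\bar z$ and relegating the non-active $\bar z$ and the lift to $Z$ to ``remaining work'' --- leaves the decisive step undone and is more convoluted than necessary.

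The paper bypasses your partition-alignment bookkeeping entirely by proving one clean statement: \emph{if $y$ lies in the support of any correlated conditional $p_{XY|\bar Z=z}$, then $H(X|Y=y)=0$} (unconditional entropy). For suppose $H(X|Y=y)>0$. Since $p_{XY|\bar Z=z}$ is correlated BI with binary $X$, it satisfies $H(X|Y,\bar Z=z)=0$, so $p(x,y|z)=0$ for one value $x$; meanwhile $H(X|Y=y)>0$ produces some $\tilde z$ with $p(\bar x,y|\tilde z)>0$ where $\bar x=x\oplus 1$. The correlated slice has full $X$-support, so $supp[p_{X|\bar Z=\tilde z}]\subset supp[p_{X|\bar Z=z}]$ is automatic, and because $p_{XY|\bar Z=\tilde z}$ is either uncorrelated (clause (i) of $\blacktriangleleft$) or correlated with $H(X|Y,\bar Z=\tilde z)=0$ (clause (iii)), one has $p_{XY|\bar Z=\tilde z}\blacktriangleleft p_{XY|\bar Z=z}$ in \emph{both} cases. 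Then $(\bar x,y)$ is the forbidden pair in Lemma~\ref{Lem:DistOpt1-2xn}, contradicting $K_D(p_{XY\bar Z})=I(X:Y|\bar Z)$. The point you miss is that the useful application of $\blacktriangleleft$ is not active-versus-active but active-versus-\emph{arbitrary}: restricting to active slices only establishes consistency of labels across correlated conditionals and cannot rule out a $y$ in an uncorrelated slice seeing both values of $X$ --- precisely the obstruction to your global coarse-graining $c(Y)$.

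Once that claim is in hand the protocol is immediate: Bob announces $M(y)=1$ iff $H(X|Y=y)=0$. On correlated slices every $y$ has $M=1$; on uncorrelated slices $J_{XY|\bar Z}$ is constant; either way $I(M:J_{XY|\bar Z}|\bar Z)=0$, and given $M$ both parties compute $J_{XY|\bar Z M}$ locally. There is no separate alignment step and no separate treatment of non-active slices --- the ``remaining work'' you flag is exactly what the direct claim absorbs in one stroke. (As for the lift from $\bar Z$ back to $Z$: the paper's proof in fact concludes that $p_{XY\bar Z}$ is UBI-PD, which places $p_{XYZ}$ in UBI-PD$\downarrow$ as in Theorem~\ref{Thm:Reversible-Structure-2x2}; your instinct that this step requires care is sound, but it is not where the main content lies.)
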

\begin{proof}
It suffices to prove necessity.  If $K_C(p_{XYZ})=K_D(p_{XYZ})$ then by Lemma \ref{Lem:Structure1}, $p_{XY\overline{Z}}$ must be block independent, where $I(X:Y\downarrow Z)=I(X:Y|\overline{Z})$.  By the same reasoning as in Theorem \ref{Thm:Reversible-Structure-2x2}, $K_C(p_{XYZ})=K_D(p_{XYZ})$ implies that $K_D(p_{XY\overline{Z}})= I(X:Y|\overline{Z})$.  Hence we will apply Lemma \ref{Lem:DistOpt1-2xn} on the equality $K_D(p_{XY\overline{Z}})= I(X:Y|\overline{Z})$ to derive a necessary condition for $p_{XY\overline{Z}}$.

Without loss of generality, assume that $|\mc{X}|=2$.  Since $p_{XY\overline{Z}}$ is BI, every conditional distribution $p_{XY|\overline{Z}=z}$ is either:
\begin{itemize}
\item[(I)] Uncorrelated $I(X:Y|Z=z)=0$ or 
\item[(II)] Correlated and satisfying $H(X|Y,\overline{Z}=z)=0$.
\end{itemize}
Suppose now that $H(X|Y=y)>0$ for some $y\in\mc{Y}$, but nevertheless $y$ is a possible event in correlated distribution $p_{XY|\overline{Z}=z}$.  The latter means that $p(x,y|z)=0$ for some $x\in\{0,1\}$.  However, $H(X|Y=y)>0$ implies the existence of some $\tilde{z}\not=z$ such that $p(\overline{x},y|\tilde{z})>0$, where $\overline{x}=x\oplus 1$.  With $p_{XY|\overline{Z}=z}$ having correlations, then $supp[p_{X|\overline{Z}=\tilde{z}}]\subset supp[p_{X|\overline{Z}=z}]$, and since $p_{XY|\overline{Z}=\tilde{z}}$ has either form (I) or (II), it follows that $p_{XY|\overline{Z}=\tilde{z}}\blacktriangleleft p_{XY|\overline{Z}=z}$.  But then Lemma \ref{Lem:DistOpt1-2xn} implies that $K_D(p_{XY\overline{Z}})< I(X:Y|\overline{Z})$, which contradicts our assumption.  Therefore, if $y$ is a possible event in any correlated conditional distribution $p_{XY|\overline{Z}=z}$, then $H(X|Y=y)=0$.  Consequently, we can define the following message for Bob and maximal conditional common functions:
\begin{align}
&M(y)=\begin{cases}0\quad\text{if $H(X|Y=y)>0$}\\1\quad\text{if $H(X|Y=y)=0$}\end{cases}&J_{XY|\overline{Z}}(x,y,z)&=\begin{cases}0\quad\text{if $p_{XY|\overline{Z}=z}$ is uncorrelated}\\x\quad\text{$p_{XY|\overline{Z}=z}$ is correlated}\end{cases}\notag\\
&J_{XY|\overline{Z}M}(x,y,z,m)=\begin{cases}0\quad\text{if $m=0$}\\x\quad\text{if $m=1$}.\end{cases}
\end{align}
It is obvious that $I(J_{XY|Z}:M|Z)=0$ since $J_{XY|Z}=0$ for all $z$ whenever $p_{XY|Z=z}$ is uncorrelated, and $M=1$ for all $z$ whenever $p_{XY|Z=z}$ is correlated.  Also, $J_{XY|ZM}$ becomes a shared variable for Alice and Bob since it can be computed both by Alice and Bob given $M$.  We thus, see that $p_{XY\overline{Z}}$ is UBI-PD.

\end{proof}

\section{Calculation of Theorem \ref{Thm:Gap2x2}}

First recall that for a two-qubit state $\rho$, its concurrence is defined by $C(\rho)=\sqrt{\lambda_1}-\sqrt{\lambda_2}-\sqrt{\lambda_3}-\sqrt{\lambda_4}$, where the $\lambda_i$ are the non-increasing eigenvalues of the operator $\rho\tilde{\rho}$, with $\tilde{\rho}=(\sigma_2\otimes\sigma_2)\rho^*(\sigma_2\otimes\sigma_2)$ \cite{Wootters-1998a}.  Here, $\rho^*$ is the complex conjugate of $\rho$ in the computational basis, and the $\sigma_i$ are the Pauli matrices: $\sigma_1=\left(\begin{smallmatrix}0&1\\1&0\end{smallmatrix}\right)$, $\sigma_2=\left(\begin{smallmatrix}0&-i\\i&0\end{smallmatrix}\right)$, and $\sigma_3=\left(\begin{smallmatrix}1&0\\0&-1\end{smallmatrix}\right)$.  For a two-qubit state with concurrence $C(\rho)$, its entanglement of formation is given by $\mathsf{E}(C(\rho))$, where $\mathsf{E}(x):=h(\tfrac{1}{2}[1-\sqrt{1-x^2}])$ and $h(x):=-x\log x-(1-x)\log(1-x)$.  Note that $\mathsf{E}(x)$ is strictly convex in $x$.

When $|\mc{X}|=|\mc{Y}|=2$, reversible distributions are UBI-PB$\downarrow$.  If $K_D(p_{XYZ})>0$, then the distribution is UBI-PB, and we wish to show:
\begin{align}
\label{Eq:Appendix-2x2eqn}
K_D(p_{XYZ})&=\sum_{z\in\mc{Z}}p(z)\mathsf{E}\left(2\sqrt{p(0|z)p(1|z)}\right)\notag\\
E_F(\rho_{XY})&=\mathsf{E}\left(2\sum_{z\in\mc{Z}}p(z)\sqrt{p(0|z)p(1|z)}\right).
\end{align}
Up to a relabeling of $x$, a general UBI-PB distribution in $2\times 2$ is given by $p(x,y|z)=\delta_{xy}p(x|z)$ for $x,y\in\{0,1\}$ and arbitrary $p(x|z)$.  Then 
$\rho_{AB}=\sum_z p(z)\op{\varphi_z}{\varphi_z},$
where $\ket{\varphi_z}=\sum_{x=0}^1\sqrt{p(x|z)}\ket{xx}$.  This corresponds to a single qubit density matrix 
\begin{align}
\label{Eq:omega}
\omega&=\begin{pmatrix}\sum_zp(z)p(0|z)&\sum_zp(z)\sqrt{p(0|z)p(1|z)}\\\sum_zp(z)\sqrt{p(0|z)p(1|z)}&\sum_zp(z)p(1|z)\end{pmatrix}\notag\\
&=\begin{pmatrix}\sum_zp(z)p(0|z)&\tfrac{1}{2}\sum_zp(z)\sqrt{C(\varphi_z)}\\\tfrac{1}{2}\sum_zp(z)\sqrt{C(\varphi_z)}&\sum_zp(z)p(1|z)\end{pmatrix}.
\end{align}
It can be seen that $\sigma_2\omega^*\sigma_2=\sigma_1\omega\sigma_1$.  Hence, the concurrence of $\rho_{AB}$ can be computed from the eigenvalues of the $2\times 2$ matrix $\omega\tilde{\omega}=\omega\sigma_1\omega\sigma_1$, which are
\[\left(\sqrt{p(0)p(1)}\pm \sum_zp(z)\sqrt{p(0|z)p(1|z)}\right)^{2}.\]
The Cauchy-Schwarz Inequality then gives that 
\begin{equation}
C(\rho_{AB})=\sqrt{\lambda_{\max}}-\sqrt{\lambda_{\min}}=2\sum_zp(z)\sqrt{p(0|z)p(1|z)}=\sum_zp(z)C(\varphi_z).
\end{equation}
Since $K_D(p_{XYZ})=H(J_{XY|Z}|Z)=\sum_zp(z)E(C(\varphi_z))$, the calculation of Eq. \eqref{Eq:Appendix-2x2eqn}.  Note that by strict convexity of $\mathsf{E}(x)$, we have 
\[\sum_{z\in\mc{Z}}p(z)\mathsf{E}\left(2\sqrt{p(0|z)p(1|z)}\right)=\mathsf{E}\left(2\sum_{z\in\mc{Z}}p(z)\sqrt{p(0|z)p(1|z)}\right)\]
iff $p(0|z)p(1|z)$ is constant for all $z$.  This implies that $H(X|Z=z)$ is constant for all $z\in\mc{Z}$.

\end{document}